\newcommand{\calZ}{{\mathcal Z}}
\newcommand{\calM}{{\mathcal M}}
\newcommand{\calC}{{\mathcal C}}
\newcommand{\bsh}{{\backslash}}
\newcommand{\N}{{\mathbb N}}
\renewcommand{\P}{\N_{>0}}
\newcommand{\Q}{{\mathbb Q}}
\newcommand{\R}{{\mathbb R}}
\newcommand{\C}{{\mathbb C}}
\newcommand{\Frac}[2]{\displaystyle \frac{#1}{#2}}
\newcommand{\Sum}[2]{\displaystyle{\sum_{#1}^{#2}}}
\newcommand{\Prod}[2]{\displaystyle{\prod_{#1}^{#2}}}
\newcommand{\eps}{{\varepsilon}}
\def\Lyn{{\mathcal Lyn}}
\def\CX{\C \langle X \rangle}
\def\ad{\mathrm{ad}}
\def\AXX{\serie{A}{X}}
 \def\shuffle{\mathop{_{^{\sqcup\!\sqcup}}}}
\gdef\stuffle{\;%
  \setlength{\unitlength}{0.0125cm}%
  \begin{picture}(20,10)(220,580)
  \thinlines
  \put(220,592){\line( 0,-1){ 10}}
  \put(220,582){\line( 1, 0){ 20}}
  \put(240,582){\line( 0, 1){ 10}}
  \put(230,592){\line( 0,-1){ 10}}
  \put(225,587){\line( 1, 0){ 10}}
  \end{picture}\;
}
\newtheorem{corollary}{Corollary}
\newtheorem{proposition}{Proposition}
\newtheorem{theorem}{Theorem}
\newtheorem{lemma}{Lemma}
\newtheorem{definition}{Definition}
\newtheorem{conjecture}{Conjecture}
\newtheorem{example}{Example}
\newtheorem{state}[corollary]{Statement}
\newcommand{\Li}{\mathrm{Li}}
\def\L{\mathrm{L}}
\def\H{\mathrm{H}}
\def\P{\mathrm{P}}
\def\L{\mathbb{L}}
\def\V{\mathbb{V}}
\def\bv{\mid}
\def\pointir{\unskip . --- \ignorespaces}
\def\diag{\mathbf{diag}}
\def\ldiag{\mathbf{ldiag}}
\def\LDIAG{\mathbf{LDIAG}}
\def\DIAG{\mathbf{DIAG}}
\def\MQS{\mathbf{MQSym}}
\def\ncp#1#2{#1\langle #2\rangle}
\def\al{\alpha}
\def\be{\beta}
\def\ncp#1#2{#1\langle #2\rangle}
\def\ra{\rightarrow}
\newcommand{\poly}[2]{#1 \langle #2 \rangle}
\def\QX{\poly{\Q}{X}}
\def\CX{\poly{\C}{X}}
\def\QY{\poly{\Q}{Y}}
\newcommand{\serie}[2]{#1 \langle \! \langle #2 \rangle \! \rangle}
\def\CXX{\serie{\C}{X}}
\def\CYY{\serie{\C}{Y}}
\def\Lie{{\cal L}ie}
\def\LQX{\Lie_{\Q} \langle X \rangle}
\def\LQY{\Lie_{\Q} \langle Y \rangle}
\def\LAXX{\Lie_{A} \langle\!\langle X \rangle \!\rangle}
\def\LCXX{\Lie_{\C} \langle\!\langle X \rangle \!\rangle}
\def\PKZ{\Phi_{KZ}}
\newenvironment{proof}{\par \noindent {\sc Proof}\ --\ }{$\square$ \par}
\def\CX{\C \langle X \rangle}
\def\CXX{\serie{\C}{X}}
\newcommand{\calT}{{\cal T}}
\def\Sum{\displaystyle\sum}
\def\Prod{\displaystyle\prod}
\def\Frac{\displaystyle\frac}
\def\path{\rightsquigarrow}
\def\reg{\mathop\mathrm{reg}\nolimits}
\gdef\minishuffle{{\scriptstyle \shuffle}}
\gdef\ministuffle{{\scriptstyle \stuffle}}
\begin{document}
\title{An interface between physics and number theory}
\author{G\'erard H.E. Duchamp$^1$,  Hoang Ngoc Minh$^{1,2}$, Allan I. Solomon$^{3,4}$, Silvia Goodenough$^1$}
\address{
$^1$ Universit\'e Paris 13 -- LIPN - UMR 7030 CNRS,
F-93430 Villetaneuse, France.}
\address{
$^2$ Universit\'e Lille 2,
1 Place D\'eliot, 59024 Lille C\'edex, France.}
\address{
$^3$ Physics and Astronomy Department,The Open University,
Milton Keynes MK7~6AA, UK}
\address{
$^4$
Universit\'e Pierre et Marie Curie, LPTMC,  UMR 7600 CNRS,
4 pl. Jussieu,  75252 Paris 5, France.}
\begin{abstract}
We extend the Hopf algebra description of a simple quantum system given previously, to a more elaborate Hopf algebra, which is rich enough to encompass that related to a description of perturbative quantum field theory (pQFT). This provides a {\em mathematical} route  from an algebraic description of non-relativistic, non-field theoretic quantum statistical mechanics to one of relativistic quantum field theory.

Such a description necessarily involves treating the algebra of polyzeta functions, extensions of the Riemann Zeta function, since these occur naturally in pQFT.  This provides a link between physics, algebra and number theory. As a by-product of this approach, we are led to indicate  {\it inter alia} a basis for concluding that  the Euler gamma constant $\gamma$ may be rational.
\end{abstract}
\section{Introduction}
In an introductory paper delivered at this Conference\footnote{Talk delivered at the 28th International Colloquium on Group Theoretical Methods in Physics, Northumbria University, Newcastle, July  2010.}, {\it ``From Quantum Mechanics to Quantum Field Theory:
The Hopf route''}, Allan I. Solomon \cite{AISG28,BDSHP} {\it et al.} start their
exposition with the Bell numbers $B(n)$ which count the number of {\it set partitions} within a given
set of $n$ elements. It is shown there that these very elementary combinatorial ideas are in a sense generic within quantum statistical mechanics. Further, it turns out that this approach also leads to a Hopf algebraic description of the physical system. Since recent work on relativistic field theory (perturbative quantum field theory or pQFT) also leads to a Hopf algebra description, the question naturally arises as to whether the algebraic structure,  implicit  in essentially the non-relativistic commutation relations, may be extended to the more complicated, relativistic system. Instead of extending the physical picture, we have chosen to follow a mathematical route. Doing so, it turns out that the basic structure, as exemplified in
{\it ``The Hopf route''} for the special case of Bell number diagrams,  provides the building blocks of a non-commutative parametric Hopf
algebra which specializes on the one hand to a Hopf algebra of noncommutative
symmetric functions, and on the other hand to a proper Hopf sub-algebra of that inherent in perturbative quantum field theory (pQFT).  It is well known that analysis of pQFT reveals that polyzeta functions, extensions of the Riemann zeta function of number theory, play an important role.  It is not surprising therefore that the algebraic analysis embarked upon here should reveal properties of the polyzetas.  And indeed it is the case that  our  algebra   admits,
as a homomorphic image, the algebra of polyzeta functions of number theory.  This foray into   number theory leads us to consider Euler's gamma constant $\gamma$, and conclude, on the basis of Lemmas given in the text,  that $\gamma$   may in fact be a rational number.

\section{One-parameter groups and the exponential Hadamard product}

As often in physics, one has to consider one-parameter groups with infinitesimal
generator $F$,
\begin{equation}
	G(t):=e^{tF}\ .
\end{equation}
When we have two infinitesimal generators which commute $[F_1,F_2]=0$ (for example, with functions constructed from   two distinct boson modes), one can obtain the group $G_{12}(t):=e^{t F_1 F_2}$ with  operation
\begin{equation}
	G_{12}(t):=G_1(t\frac{\partial}{\partial x})[G_2(x)]\big|_{x=0}\ .
\end{equation}
where $G_i(t):=e^{tF_i},\ i=1,2$. Up to  constant terms, this operation is
a {\em Hadamard product} (i.e. the componentwise product, which we denote below by $\mathcal{H}$)
 but performed on the exponential generating
series \cite{GOF18}. Specifically with
\begin{eqnarray}\label{ExpHad}
F(z)=\sum_{n\geq 0} a_n\frac{z^n}{n!},\;\;\; G(z)=\sum_{n\geq 0}
b_n\frac{z^n}{n!},\;\;\;\;
\mathcal{H}(F,G):=\sum_{n\geq 0}
a_nb_n\frac{z^n}{n!}\ .
\end{eqnarray}
This formula is known as the {\it product formula} \cite{BBM}.
Since one-parameter groups are exponentials, we can try to find a universal formula for
(\ref{ExpHad}). To this end we set
\begin{eqnarray}\label{setting1}
F(z)=\exp\left(\sum_{n=1}^\infty L_n\frac{z^n}{n!}\right),\
G(z)=\exp\left(\sum_{n=1}^\infty V_n\frac{z^n}{n!}\right).
\end{eqnarray}
and remark that $F,G$ can simultaneously  be expressed in terms of Bell\footnote{Named after the mathematician and science fiction author Eric Temple Bell (1883, Scotland, to 1960, US).} polynomials \cite{comtet}
 $B_n$ as
\begin{equation}\label{bellpol1}
\exp\left(\sum_{n=1}^\infty X_n\frac{z^n}{n!}\right)=\sum_{n=0}^\infty \frac{z^n}{n!} B_n(X_1,X_2,\cdots X_n)\ .
\end{equation}
The coefficients of the polynomials $B_n$ are positive integers which can be interpreted as the number of {\it unordered set partitions} which are collections of mutually disjoint non-empty subsets (blocks) of a (finite) set $F$ whose union is $F$. For example, the {\it set partitions} of $F=\{1,2,3,4\}$ in two blocks of size two are
\begin{equation}
\{\{1, 2\}, \{3, 4\}\}\ \{\{1, 3\}, \{2, 4\}\}\ \{\{1, 4\}, \{2, 3\}\}\ .
\end{equation}
The type of a set partition $P=\{A_1,A_2,\cdots A_k\}$ of $F$ is the sequence of numbers
$$
Type(P):=(\#\{l\ s. t.\ |A_l|=k\})_{k\geq 1}\ .
$$
For example, the type of the set partition $P=\{\{2,4,6,7\},\{1,5\},\{3,8\}\}$ is $(0,2,0,1,0,0,0,\cdots)$.
We have the formula
\begin{equation}\label{bellpol2}
B_n(X_1,X_2,\cdots X_n)=\sum_{P\in UP_n} \mathbb{X}^{Type(P)}	
\end{equation}
(where $UP_n$ is the set of unordered partitions of $[1\cdots n]$) which can be proved (following the
 argument of Fa\`a di Bruno) using the exponential formula \cite{PDGP}.
Now from (\ref{setting1}), (\ref{bellpol1}), (\ref{bellpol2}), one concludes that
\begin{eqnarray}\label{expansion1}
\mathcal{H}(F,G)=\sum_{n\geq 0} \frac{z^n}{n!} \sum_{P_1,P_2\in
UP_n} \L^{Type(P_1)}\V^{Type(P_2)}\ .
\end{eqnarray}
This formula can be rewritten as a diagrammatic expansion using the incidence matrix of the
pairs of set partitions $P_1,P_2$ involved in (\ref{expansion1}) with respect to intersection numbers, i.e.
\begin{equation}
	\mathrm{Matrix\ of}\ (P_1,P_2)=\Big( \# B_1\cap B_2\Big)_{B_1\in P_1,\ B_2\in P_2}\ .
\end{equation}
These matrices, with unordered rows and columns, can be represented by a diagram with the following rule. Draw two rows of spots (white spots on the upper row and black on the lower row) label each black spot by a block of $P_1$ and each white spot by a block of $P_2$ and join two spots (one of each kind) by the number of common elements in the corresponding blocks\footnote{Compare this diagrammatic approach  with the white and black spots of the preceding presentation\cite{AISG28}.}. The result is the  diagram of the following figure.

%PSTricks content-type (pstricks.sty package needed)
%Add \usepackage{pstricks} in the preamble of your LaTeX file
%You can rescale the whole picture (to 80% for instance) by using the command \def\JPicScale{0.8}
%\ifx\JPicScale\undefined\def\JPicScale{1}\fi
%\ifx\JPicScale\undefined\def\JPicScale{1}\fi
\def\JPicScale{0.7}
\unitlength \JPicScale mm

\begin{picture}(122.5,82.5)(-5,10)
%\begin{picture}(122.5,82.5)(35,10)
%\begin{picture}(122.5,87.5)(0,5)
%\put(80.13,79.87){\psellipse[linewidth=0.05](0,0)(1,1)}
\linethickness{0.75mm}
\multiput(69.34,49.61)(0.12,0.37){76}{\line(0,1){0.37}}
\linethickness{0.75mm}
\multiput(72.23,52.11)(0.12,0.37){67}{\line(0,1){0.37}}
\linethickness{0.75mm}
\multiput(82.76,78.82)(0.12,-0.57){48}{\line(0,-1){0.57}}

\put(60,80){\circle{5}}
\put(80.13,79.87){\circle{5}}
\put(100.13,79.87){\circle{5}}
\put(120,80){\circle{5}}

\put(70.2,49.88){\circle*{5}}
\put(90.3,49.56){\circle*{5}}
\put(110,50){\circle*{5}}

\put(56,86){$\{1\}$}
\put(70,86){$\{2,3,4\}$}
\put(88,86){$\{5,6,7,8,9\}$}
\put(115,86){$\{10,11\}$}

\put(60.2,41.88){$\{2,3,5\}$}
\put(77.3,41.88){$\{1,4,6,7,8\}$}
\put(104,41.88){$\{9,10,11\}$}

\linethickness{0.75mm}
\multiput(71.97,49.74)(0.12,0.14){212}{\line(0,1){0.14}}

\linethickness{0.75mm}
\multiput(102.76,79.47)(0.12,-0.48){57}{\line(0,-1){0.48}}

\linethickness{0.75mm}
\multiput(92.76,49.61)(0.12,0.36){81}{\line(0,1){0.36}}
\linethickness{0.75mm}
\multiput(90.26,48.29)(0.12,0.37){78}{\line(0,1){0.37}}
\linethickness{0.75mm}
\multiput(88.29,49.08)(0.12,0.36){81}{\line(0,1){0.36}}
\linethickness{0.75mm}
\multiput(111.97,50.66)(0.12,0.37){72}{\line(0,1){0.37}}
\linethickness{0.75mm}
\multiput(109.74,51.05)(0.12,0.37){73}{\line(0,1){0.37}}
\linethickness{0.75mm}
\multiput(61.71,78.03)(0.12,-0.13){216}{\line(0,-1){0.13}}
\end{picture}

\vspace{-2cm}

%\begin{center}
{\small{\bf Fig 1}\pointir \it Diagram from $P_1,\ P_2$ (set partitions of $[1\cdots 11]$).\\
$P_1=\left\{\{2,3,5\},\{1,4,6,7,8\},\{9,10,11\}\right\}$ and  $P_2=\left\{\{1\},\{2,3,4\},\{5,6,7,8,9\},\{10,11\}\right\}$ (respectively black spots for $P_1$ and white spots for $P_2$).\\
The incidence matrix corresponding to the diagram (as drawn) or these partitions is
${\pmatrix{0 & 2 & 1 & 0\cr 1 & 1 & 3 & 0\cr 0 & 0 & 1 & 2}}$
%$\begin{pmatrix} 0 & 2 & 1 & 0\cr 1 & 1 & 3 & 0\cr 0 & 0 & 1 & 2 \end{pmatrix}$
 but, due to the fact that the defining partitions are unordered, one can permute the spots (black and white) among themselves and so the lines and columns of this matrix can be permuted; the diagram could be represented by the matrix ${\pmatrix{0 & 0 & 1 & 2\cr 0 &  2 & 1 & 0\cr 1 & 0 & 3 & 1}}$ as well.}
%\end{center}

%\end{document}

The set of such diagrams will be denoted by $\diag$. This is precisely the set of bipartite graphs with multiple edges and no isolated vertex (and no order between the spots) and integer multiplicities. It is straightforward to verify that concatenation is associative with unit (the empty graph) within the set $\diag$. We therefore denote the result of concatenation by  $[d_1|d_2]_D$ (arbitrarily  putting $d_2$ on the right of $d_1$  because the law is commutative). The monoid
$(\diag,[-|-]_D,1_{\diag})$ is a {\em free commutative monoid}. Its generators are the connected (non-empty) diagrams ($\diag_c$). The algebra of $\diag$ is the set of formal (finitely supported) sums
\begin{equation}
\{\sum_{d\in \diag}\alpha(d) d\}_{\alpha\in \C^{(\diag)}}
\end{equation}
The freeness of $\diag$ (as a commutative monoid) implies that this algebra is isomorphic to the algebra of polynomials $\C[\diag_c]$. We shall denote it differently (i. e. $\DIAG$) in the sequel (see \cite{GOF18} for details) as we wish to endow it with the structure of a Hopf algebra.\\
Indeed, the diagrams allow us to rewrite formula (\ref{expansion1}) as

\begin{eqnarray}\label{diag_expansion}
\mathcal{H}(F,G)=\sum_{n\geq 0} \frac{z^n}{n!} \sum_{d\in diag\atop
|d|=n} mult(d)\L^{\al(d)}\V^{\be(d)}
\end{eqnarray}
where $\al(d)$ (resp. $\be(d)$) is the ``white spot type'' (resp.
the ``black spot type'') i.e. the multi-index $(\al_i)_{i\in \N^+}$
(resp. $(\be_i)_{i\in \N^+}$) such that $\al_i$ (resp. $\be_i$) is
the number of white spots (resp. black spots) of degree $i$ ($i$
lines connected to the spot) and $mult(d)$ is the number of pairs of
unordered partitions of $[1\cdots |d|]$ (here
$|d|=|\al(d)|=|\be(d)|$ is the number of lines of $d$) with
associated diagram $d$.

\smallskip
Now one may naturally ask \\
{\it Q1) ``Is there a (graphically) natural multiplicative structure
on $\diag$ such that the arrow
\begin{equation}
    d \mapsto \textrm{mon}(d)=\L^{\al(d)}\V^{\be(d)}
\end{equation}
is a morphism ?''}

\smallskip
The answer is ``yes'' concerning the algebraic structure  of $\LDIAG$ as it is easily seen that

\begin{equation}
	\textrm{mon}([d_1|d_2]_D)=\textrm{mon}(d_1)\textrm{mon}(d_2)\ .
\end{equation}

But the algebra of polynomials is also endowed with the  structure of a Hopf algebra (decomposition property \cite{AISG28,BDSHP}). So it is natural to ask whether the coalgebra structure of the polynomials could be lifted to $\DIAG$ and, further, if this allows enriching the structure of $\DIAG$ to a Hopf algebra. Again, the answer is ``yes'' and the appropriate comultiplication is denoted by $\Delta_{BS}$ (``black spot'' coproduct, see \cite{GOF18} for details). Computing $\Delta_{BS}(d)$ is immediate; one simply divides  the black spots into two subsets, then obtain the tensors and sum all the results. More formally; noting  the set of black spots of the diagram $d$ by $BS(d)$ and denoting by $d[I]$ the sub-diagram whose black spots are in $I$ and the white connected to $I$, one has
\begin{equation}\label{coprod_diag}
\Delta_{BS}(d)=\sum_{I+J=BS(d)} d[I]\otimes d[J]	
\end{equation}
For example, one labels by $\{a,b,c\}$ the three black spots of the diagram $d$ of {\bf Fig 1}. In this case, the coproduct reads
\begin{eqnarray*}
&&\Delta_{BS}(d)=d\otimes	[\ ]+ [\ ]\otimes	d+
d[\{a\}]\otimes d[\{b,c\}]+d[\{b\}]\otimes d[\{a,c\}]+\cr
&&d[\{c\}]\otimes d[\{a,b\}]+d[\{a,b\}]\otimes d[\{c\}]+
d[\{a,c\}]\otimes d[\{b\}]+d[\{b,c\}]\otimes d[\{a\}]\ .
\end{eqnarray*}

\subsection{Labelling the nodes and the noncommutative analogue $\LDIAG$}

If one needs to label the black spots in order to compute the comultiplication,
it is obvious that the precise form of labelling is ultimately irrelevant. It is more important to
 label the nodes endowing their sets with a linear order (one for the black spots
among themselves and one for the white spots). This leads to $\LDIAG$, the
(non-deformed) noncommutative analogue of $\DIAG$ \cite{GOF4}.\\
The solution (of this non-deformed problem \cite{GOF4}) is simple and
consists in labelling the black (resp. white) spots from left to right and
from ``$1$'' to $p$ (resp. $q$); thus  one obtains the {\it labelled diagrams}.
Again concatenation is associative and endows the set of the labelled diagrams
(here denoted by $\ldiag$) with the structure of a monoid (the unit is the empty diagram
characterized by $p=q=0$).\\
Again, and with the same formula (\ref{coprod_diag})
(but not applied to the same objects), the algebra of this monoid, $\LDIAG$ and the
counit, which consists in taking the coefficient of the empty diagram, is a Hopf algebra
\footnote{At this stage, it is only a graded bialgebra but, as it is graded and of  finite
dimension, a general theorem states that  an antipode exists; (see \cite{GOF18} formula (19)
for an explicit formula of this antimorphism).}.

\subsection{The deformed case.}

The preceding coding is particularly well adapted to the
deformation we wish to construct here. The philosophy of the
deformed product is expressed by the descriptive formula
%\begin{eqnarray}
%    &&[d_1|d_2]_{L(q_c,q_s)}=\cr\cr
%    &&\sum_{CS(d_1,d_2)}
%    q_c^{w1(CS)}q_s^{w2(CS)} CS([d_1|d_2]_{L})
%\end{eqnarray}
\begin{equation} [d_1|d_2]_{L(q_c,q_s)}=
    \sum_{CS(d_1,d_2)}
    q_c^{w1(CS)}q_s^{w2(CS)} CS([d_1|d_2]_{L(q_c,q_s)})
\end{equation}
where
\begin{itemize}
    \item $q_c,q_s\in \C$ or $q_c,q_s$ formal. These and other cases may be unified by considering the set of coefficients as belonging to a ring $K$
    \item the exponent of $w1(CS)$ is the number of crossings of ``what crosses'' times its weight
    \item the exponent of $w2(CS)$ is the product of the weights of ``what is overlapped''
    \item $CS(d_1,d_2)$ are the diagrams obtained from $[d_1|d_2]_L$ by the process of crossing and superposing the black spots of $d_2$ on to those of $d_1$, the order and distinguishability of the black spots of $d_1$ (i.e. $d_2$) being preserved.
\end{itemize}

What is striking is that this law is associative. This fact is by no means trivial and three proofs have been  given (one of them can be found in \cite{GOF18}). It can be shown that this new algebra (associative with unit denoted $\LDIAG(q_c,q_s)  )$ can be endowed with two comultiplications, $\Delta_0$ and $\Delta_1$, such that the result is a Hopf algebra which we denote by  $\LDIAG(q_c,q_s,q_t)$.\\
It can be shown that this deformed Hopf algebra has two interesting specializations
%\begin{eqnarray*}
%    &&\LDIAG (0,0,0)\simeq \LDIAG;\cr
%    &&\LDIAG (1,1,1)\simeq \MQS\ ;
%\end{eqnarray*}
\begin{equation*}
 \LDIAG (0,0,0)\simeq \LDIAG\ \mathrm{ and }\ \LDIAG (1,1,1)\simeq \MQS\ ;
\end{equation*}
$\LDIAG (0,0,0)$ (no crossing and no superposition) is the undeformed case whereas $\LDIAG (1,1,1)$ is $\MQS$, the Hopf algebra of
Noncommutative Matrix Quasisymmetric functions \cite{DHT}.

\subsection{The arrow $\LDIAG(1,1)\ra$ polyzetas.}

We now  explain how a refinement of the ``black spot type'' $\be(d)$ of formula (\ref{diag_expansion}) evaluates the statistics ($\be(d)$) by ``number of nodes with outgoing degree $k$''.  We take advantage of the labelling and count these degrees node by node. Thus, for a labelled diagram $d$, $I(d,k)$ is simply the outgoing degree of the node with label $k$. It turns out, and is straightforward to show, that
$$
I(d):=[I(d,1),I(d,2)\cdots I(d,p)]
$$
($d$ is a diagram with $p$ black spots) is a composition of the integer $|d|$ (number of edges of $d$).\\
Now, we define  two structures which will be ubiquitous in the next section: {\it the shuffle and stuffle algebras}.
Let $X$ be any alphabet (i.e. set of variables). The shuffle algebra is defined on $\ncp{\C}{X}:=\C[X^*]$ by its values on the monomials (here the strings or words) by the following recursion
\begin{eqnarray}
	1_{X^*}\shuffle w&=&w\shuffle 1_{X^*}=w \textrm{ and }\cr
	xu\shuffle yv&=&x(u\shuffle yv)+y(xu\shuffle v)\ .
\end{eqnarray}
Likewise, let $Y={y_i}_{i\geq 1}$ be a countable alphabet indexed by integers $\geq 1$. The stuffle algebra \cite{Ho}
is defined on $\ncp{\C}{Y}:=\C[Y^*]$ by a recursion on the words as
\begin{eqnarray}
	1_{Y^*}\stuffle w&=&w\stuffle 1_{Y^*}=w \textrm{ and }\cr
	y_iu\stuffle y_jv&=&y_i(u\stuffle y_jv)+y_j(y_iu\stuffle v)+
	                    y_{i+j}(u\stuffle v)\ .
\end{eqnarray}
The shuffle (resp. stuffle) operation endows $\ncp{\C}{X}:=\C[X^*]$ (resp. $\ncp{\C}{Y}:=\C[Y^*]$) with the structure of an AAU\footnote{Associative Algebra with Unit: Here they are, moreover, Hopf algebras \cite{Ho} but we will not need this here.}. We define mappings
 \begin{eqnarray}
&&\phi_{X2LDIAG}\ :\ \LDIAG(1,0)\ra \ncp{\C}{X}\cr
&&\phi_{Y2LDIAG}\ :\ \LDIAG(1,1)\ra \ncp{\C}{Y}	
\end{eqnarray}
(here $X=\{x_i\}_{i\geq 1}$) by
$$
\phi_{X2LDIAG}(d)=x_{I(d,1)}x_{I(d,2)}\cdots x_{I(d,p)}
$$
$$
\phi_{Y2LDIAG}(d)=y_{I(d,1)}y_{I(d,2)}\cdots y_{I(d,p)})
$$
and have the following result.

\begin{proposition} The mappings
$$
\phi_{X2LDIAG}\ :\ (\LDIAG(1,0),[\ .|\ .]_{LD(1,0)}\ra (\ncp{\C}{X},\stuffle)
$$	
$$
\phi_{Y2LDIAG}\ :\ (\LDIAG(1,1),[\ .|\ .]_{LD(1,1)}\ra (\ncp{\C}{Y},\stuffle)
$$	
are epimorphisms of AAU's.
\end{proposition}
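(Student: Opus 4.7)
The plan is to verify three things for each map: unit preservation, multiplicativity (with respect to concatenation on the left and shuffle/stuffle on the right), and surjectivity. Unit preservation is immediate, since the empty labelled diagram $1_{\ldiag}$ has no black spots ($p=0$), so both $\phi_{X2LDIAG}$ and $\phi_{Y2LDIAG}$ send it to the empty product, namely $1_{X^*}$ (resp.\ $1_{Y^*}$).

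For surjectivity, given any word $w=z_{i_1}\cdots z_{i_p}$ of $X^*$ (resp.\ $Y^*$), I would construct a labelled diagram $d_w$ with exactly $p$ black spots ordered from left to right, such that the $k$-th black spot has outgoing degree $i_k$. The simplest such $d_w$ attaches, to the $k$-th black spot, $i_k$ edges each of which connects to its own private white spot, so that no white spot is shared. By construction $I(d_w,k)=i_k$ for every $k$, whence $\phi(d_w)=w$, giving surjectivity.

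For multiplicativity, I would unpack the descriptive formula for $[d_1|d_2]_{L(q_c,q_s)}$ at the specializations $(q_c,q_s)=(1,0)$ and $(q_c,q_s)=(1,1)$, and check that after applying $\phi$ one recovers the recursive definitions of shuffle (resp.\ stuffle) given in the text. The key combinatorial observation is that the ordered sequence of black spots of any term of $CS(d_1,d_2)$ is obtained by a shuffle of the ordered sequence of black spots of $d_1$ with that of $d_2$, to which one may superimpose pairs: a superposition of a black spot of $d_1$ of outgoing degree $a$ with a black spot of $d_2$ of outgoing degree $b$ creates a single black spot of degree $a+b$. Reading the image under $\phi_{Y2LDIAG}$ this is exactly the merging term $y_{a+b}(u\stuffle v)$ in the stuffle recursion, while the two non-superposed choices produce the $y_a(u\stuffle y_bv)$ and $y_b(y_au\stuffle v)$ terms. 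Forbidding superposition (case $q_s=0$) collapses the right-hand side to the pure shuffle recursion. A clean argument proceeds by induction on the total number $p_1+p_2$ of black spots of $d_1$ and $d_2$, matching the choice of the left-most output black spot (coming from $d_1$, from $d_2$, or from a superposition of the leading spots) with the three cases of the recursion.

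The main obstacle is precisely this combinatorial bookkeeping: one must exhibit a bijection between $CS(d_1,d_2)$ (with multiplicities) and the terms produced by the recursive expansion of the shuffle or stuffle product of the image words, with no double counting and no missing terms. Associativity of $[\cdot|\cdot]_{L(q_c,q_s)}$, already granted in the text, is the non-trivial structural fact that makes this bijection well-posed; once it is in place, the homomorphism property is read off by applying $\phi$ to both sides and the two epimorphism statements follow.
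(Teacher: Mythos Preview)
The paper does not supply a proof of this proposition: it states the result and immediately passes to the next section. There is therefore nothing to compare your argument against; what matters is whether your sketch is internally sound, and it is.

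A few remarks. First, the printed statement seems to contain a typo: both targets carry the stuffle symbol, but the $(q_c,q_s)=(1,0)$ specialisation forbids superposition and must land in the \emph{shuffle} algebra on $X$. You read it that way, correctly. Second, the heart of your multiplicativity argument is the bijection between the terms of $CS(d_1,d_2)$ and the (quasi-)shuffles of the ordered black-spot degree sequences of $d_1$ and $d_2$; once the black-spot interleaving and the set of superposed pairs are fixed, the resulting labelled diagram is determined, and its image under $\phi$ is exactly the corresponding term of the shuffle/stuffle expansion (with the merged letter $y_{a+b}$ coming from a superposition of spots of degrees $a$ and $b$). Your induction on $p_1+p_2$, matching the choice of the leftmost output black spot with the three branches of the recursion, is the natural way to make this precise. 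Third, your surjectivity construction is clean and avoids any circularity; an alternative would be to note that single-black-spot diagrams already hit every letter and then use multiplicativity, but that presupposes the homomorphism property you are proving.

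In short: the paper leaves the proof to the reader, and your outline is exactly the direct verification one would expect.
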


\bigskip
This provides  our link to  the next section.

\section{Knizhnik-Zamolodchikov differential system}

In 1986, in order to study the linear representation of the braid group $B_n$ \cite{drinfeld}
coming from the monodromy of the Knizhnik-Zamolodchikov differential equations over
$\C_*^n=\{(z_1,\ldots,z_n)\in\C^n|z_i\neq z_j\mbox{ for }i\neq j\}$~:
\begin{eqnarray}\label{KZ}
d F(z)=\Omega_nF(z),
\end{eqnarray}
where
\begin{eqnarray}
\Omega_n=\frac{1}{2{\rm i}\pi}\sum_{1\le i<j\le n}t_{i,j}\frac{d(z_i-z_j)}{z_i-z_j},
\end{eqnarray}
Drinfel'd introduced a class of formal power series $\Phi$
on noncommutative variables over the finite alphabet $X=\{x_0,x_1\}$.
Such a power series $\Phi$ is called an {\it associator}.

Since the system (\ref{KZ}) is completely integrable then
\begin{eqnarray}
d\Omega_n-\Omega_n\wedge\Omega_n=0.
\end{eqnarray}
This is equivalent to the following braid relations
\begin{eqnarray}
[t_{i,j},t_{i,k}+t_{j,k}]=0,&&\mbox{for distinct $i,j,k$}\\[0pt]
[t_{i,j},t_{k,l}]=0,&&\mbox{for distinct $i,j,k,l$}.
\end{eqnarray}

\begin{itemize}
\item For $n=2$, $\calT_2=\{t_{1,2}\}$, one has
\begin{eqnarray}
\Omega_2(z)=\frac{t_{1,2}}{2{\rm i}\pi}\frac{d(z_1-z_2)}{z_1-z_2}
\end{eqnarray}
and a solution of (\ref{KZ}) is given by
\begin{eqnarray}
F(z_1,z_2)=(z_1-z_2)^{\frac{t_{1,2}}{2{\rm i}\pi}}.
\end{eqnarray}

\item For $n=3$, $\calT_3=\{t_{1,2},t_{1,3},t_{2,3}\}$, there are two relations~:
\begin{eqnarray}
[t_{1,3},t_{1,2}+t_{2,3}]=0&\mbox{and}&[t_{2,3},t_{1,2}+t_{1,3}]=0.
\end{eqnarray}
One has
\begin{eqnarray}
\Omega_3(z)=\frac{1}{2{\rm i}\pi}
\biggl[t_{1,2}\frac{d(z_1-z_2)}{z_1-z_2}+t_{1,3}\frac{d(z_1-z_3)}{z_1-z_3}
+t_{2,3}\frac{d(z_2-z_3)}{z_2-z_3}\biggr]
\end{eqnarray}
and a solution of (\ref{KZ}) is given by
\begin{eqnarray}
F(z_1,z_2,z_3)
=G\biggl(\frac{z_1-z_2}{z_1-z_3}\biggr)
(z_1-z_3)^{\frac{t_{1,2}+t_{1,3}+t_{2,3}}{2{\rm i}\pi}}\cr
\end{eqnarray}
where $G$ satisfies the following Fuchs differential equation with
three regular singularities at $0,1$ and $\infty$~:
\begin{eqnarray}\label{drinfeld}
dG(z)=[x_0\omega_0(z)+x_1\omega_1(z)]G(z),
\end{eqnarray}
with
\begin{eqnarray}
\omega_0(z):=\frac{dz}{z}&\mbox{and}&\omega_1(z):=\frac{dz}{1-z},\label{drinfeldform}\\
x_0:=\frac{t_{1,2}}{2{\rm i}\pi}&\mbox{and}&x_1:=-\frac{t_{2,3}}{2{\rm i}\pi}\label{drinfeldletter}.
\end{eqnarray}
\end{itemize}
In the sequel, $X^*$ denotes the set of words defined over $X=\{x_0,x_1\}$.

\begin{proposition}[\cite{orlando}]\label{sol}
If $G(z)$ and $H(z)$ are exponential solutions of (\ref{drinfeld})
then there exists a Lie series $C\in\LCXX$ such that $G(z)=H(z)\exp(C)$.
\end{proposition}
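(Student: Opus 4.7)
My plan is to exploit the group-like / Lie-exponential correspondence in $\CXX$ equipped with the standard shuffle coproduct (for which $x_0, x_1$ are primitive). The hypothesis ``exponential solution'' means that $G(z)$ and $H(z)$ take values in the group of group-like elements, i.e.\ elements of the form $\exp(L)$ with $L \in \LCXX$. I will show that the ratio $C(z) := H(z)^{-1}G(z)$ is constant in $z$ and group-like, hence equal to the exponential of a Lie series $C$, which is the required identity $G(z) = H(z)\exp(C)$.

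First, I would verify that $H(z)$ is invertible in the appropriate completed algebra. Since $H(z) = \exp(L_H(z))$ with $L_H(z)$ a Lie series of positive valuation in the generators $x_0,x_1$, its inverse is simply $\exp(-L_H(z))$, so $H(z)^{-1}$ is well-defined and of the same form. Next, from $HH^{-1}=1$ one deduces $dH^{-1} = -H^{-1}(dH)H^{-1}$. Setting $M(z) := x_0\omega_0(z) + x_1\omega_1(z)$ and using $dG = MG$, $dH = MH$, a direct computation gives
\begin{eqnarray}
d(H^{-1}G) &=& -H^{-1}(dH)H^{-1}G + H^{-1}(dG) \nonumber\\
           &=& -H^{-1}MG + H^{-1}MG = 0,
\end{eqnarray}
so $C(z) \equiv C$ is an element of $\CXX$ independent of $z$.

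The second step is to show $C$ is group-like. Equip $\CXX$ with the coproduct $\Delta_{\shuffle}$ defined by $\Delta_{\shuffle}(x_0)=x_0\otimes 1 + 1\otimes x_0$ and similarly for $x_1$ (the standard shuffle Hopf algebra on $\CXX$). Group-like elements form a subgroup of the multiplicative group of invertibles; in particular, if $G$ and $H$ are group-like, so is $H^{-1}G = C$. The assumption that $G, H$ are exponentials of Lie series is exactly the statement that $G, H$ are group-like (primitive $\leftrightarrow$ Lie, group-like $\leftrightarrow$ exponential of primitive, via the classical Friedrichs / Ree characterization in the completed free associative algebra over a field of characteristic zero).

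Finally, applying the Friedrichs criterion in the reverse direction to the constant group-like $C$, one obtains a Lie series $C \in \LCXX$ (abusing notation) such that $C = \exp(C)$, and therefore $G(z) = H(z)\exp(C)$, as required. The main subtlety — and the only point where care is needed — is the precise meaning of ``exponential solution'': one must read it as ``solution taking values in the group-like elements of $(\CXX, \Delta_{\shuffle})$''. Once this is granted, the Friedrichs theorem does the work, and the differential calculation above is the cheap part.
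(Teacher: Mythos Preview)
Your proof is correct and follows essentially the same approach as the paper: compute $d(H^{-1}G)=0$ via the Leibniz rule to show the ratio is constant, then use that inverses and products of group-like elements are group-like to conclude the constant is $\exp(C)$ for some $C\in\LCXX$. The only cosmetic difference is that you name the Friedrichs/Ree criterion explicitly, whereas the paper simply invokes the group-like-to-Lie-exponential correspondence without attribution.
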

\begin{proof}
Since $H(z)H(z)^{-1}=1$ then by differentiating, we have
$d[H(z)]H(z)^{-1}=-H(z)d[H(z)^{-1}]$. Therefore if $H(z)$ is
solution of the Drinfel'd equation then
\begin{eqnarray*}
d[H(z)^{-1}]&=&-H(z)^{-1}[dH(z)]H(z)^{-1}\\
  &=&-H(z)^{-1}[x_0\omega_0(z)+x_1\omega_1(z)],\\
  d[H(z)^{-1}G(z)]&=&H(z)^{-1}[dG(z)]+[dH(z)^{-1}]G(z)\\
  &=&H(z)^{-1}[x_0\omega_0(z)+x_1\omega_1(z)]G(z)\\
  &-&H(z)^{-1}[x_0\omega_0(z)+x_1\omega_1(z)]G(z).
\end{eqnarray*}
By simplification, we deduce that $H(z)^{-1}G(z)$ is a constant
formal power series.
Since the inverse and the product of group-like elements
is group-like then we get the expected result.
\end{proof}

\section{Iterated integral and Chen generating series}
The iterated integral associated with $w=x_{i_1}\cdots x_{i_{k}}\in X^*$,
over $\omega_0$ and $\omega_1$ and along the path $z_0\path z$,
is defined by the following multiple integral
\begin{eqnarray}
\int_{z_0}^{z}\ldots
\int_{z_0}^{z_{k-1}}\omega_{i_1}(t_1)\ldots\omega_{i_k}(t_k),
\end{eqnarray}
where $t_1\cdots t_{r-1}$ is a subdivision of the path $z_0\path z$.
In an abbreviated  notation, we denote this integral by
$\alpha_{z_0}^z(w)$ and $\alpha_{z_0}^z(1_{X^*})=1$.
\begin{example}
\begin{eqnarray*}
\alpha_0^z(x_0x_1)
&=&\int_0^z\int_0^s\omega_0(s)\omega_1(t)\\
&=&\int_0^z\int_0^s\Frac{ds}s\Frac{dt}{1-t}\\
&=&\int_0^z\Frac{ds}s\int_0^s{dt}\sum_{k\ge0}t^k\\
&=&\sum_{k\ge1}\int_0^z{ds}\frac{s^{k-1}}k\\
&=&\sum_{k\ge1}\frac{z^k}{k^2}.
\end{eqnarray*}
The last sum is nothing other than the Taylor expansion of the dilogarithm $\Li_2(z)$.
\end{example}
\begin{example}
In the same way the classical polylogarithm of order $n\ge1$
is the iterated integral associated with $x_0^{n-1}x_1$,
over $\omega_0$ and $\omega_1$ and along the path $0\path z$~:
\begin{eqnarray*}
\Li_n(z)
=\sum_{k\ge1}\frac{z^k}{k^n}
=\alpha_0^z(x_0^{n-1}x_1).
\end{eqnarray*}
Generalizing  to multi-indices $(n_1,\ldots,n_r)$, one has~:
\begin{eqnarray*}
\Li_{n_1,\ldots,n_r}(z)
=\sum_{k_1>\ldots>k_r>0}\frac{z^{k_1}}{k_1^{n_1}\ldots k_r^{n_r}}
=\alpha_0^z(x_0^{n_1-1}x_1\ldots x_0^{n_r-1}x_1).
\end{eqnarray*}
This provides an analytic prolongation of $\Li_{n_1,\ldots,n_r}$
over the Riemann surface of $\C\setminus\{0,1\}$.
\end{example}

The Chen generating series along the path $z_0\path z$
associated with $\omega_0,\omega_1$ is the following power series
\begin{eqnarray}
S_{z_0\path z}=\sum_{w\in X^*}\alpha_{z_0}^z(w)\;w
\end{eqnarray}
and it is a solution of the differential equation (\ref{drinfeld})
with the initial condition
\begin{eqnarray}
S_{z_0\path z}(z_0)=1.
\end{eqnarray}

Any Chen generating series $S_{z_0\path z}$ is  group-like \cite{ree}
 and depends only on the homotopy class of ${z_0\path z}$ \cite{chen}.
The product of two Chen generating series $S_{z_1\path z_2}$
and $S_{z_0\path z_1}$ is the Chen generating series
\begin{eqnarray}
S_{z_0\path z_2}=S_{z_1\path z_2}S_{z_0\path z_1}.
\end{eqnarray}

\section{Polylogarithm, harmonic sum and polyzetas}

To the polylogarithm $\Li_{n_1,\ldots,n_r}(z)$ we associate
the following ordinary generating series
%\begin{eqnarray}
%\P_{n_1,\ldots,n_r}(z)=\frac{\Li_{n_1,\ldots,n_r}(z)}{1-z}\\
%=\sum_{N\ge0}\H_{n_1,\ldots,n_r}(N)\;z^N,
%\end{eqnarray}
\begin{equation}
\P_{n_1,\ldots,n_r}(z)=\frac{\Li_{n_1,\ldots,n_r}(z)}{1-z}
=\sum_{N\ge0}\H_{n_1,\ldots,n_r}(N)\;z^N,
\end{equation}
where
\begin{eqnarray}
\H_{n_1,\ldots,n_r}(N)
=\sum_{N\ge k_1>\ldots>k_r>0}\frac1{k_1^{n_1}\ldots k_r^{n_r}}
\end{eqnarray}
For $n_1>1$, the limit of $\Li_{n_1,\ldots,n_r}(z)$ and of $\H_{n_1,\ldots,n_r}(N)$,
for $z\rightarrow1$ and $N\rightarrow\infty$  exist and, by Abel's Theorem, are equal~:
\begin{eqnarray}
	\lim_{z\rightarrow1}\Li_{n_1,\ldots,n_r}(z)
	=\lim_{N\rightarrow\infty}\H_{n_1,\ldots,n_r}(N)
	=\zeta(n_1,\ldots,n_r),
\end{eqnarray}
where $\zeta(n_1,\ldots,n_r)$ is the convergent polyzeta
 \begin{eqnarray}
	\zeta(n_1,\ldots,n_r)=\sum_{k_1>\ldots>k_r>0}\frac1{k_1^{n_1}\ldots k_r^{n_r}}.
\end{eqnarray}

\begin{definition}
Let $\calZ$ be the $\Q$-algebra generated by convergent polyzetas
and let $\calZ'$ be the
$\Q[\gamma]$-algebra\footnote{Here, $\gamma$ stands for the Euler constant
$$\gamma=.57721 56649 01532 86060 65120 90082 40243 10421 59335 93992 35988 05767 23488 48677\ldots$$} generated by $\calZ$.
\end{definition}

To any multi-index ${\bf n}=(n_1,\ldots,n_r)$ corresponds the word $v=y_{n_1}\ldots y_{n_r}$
over the infinite alphabet $Y=\{y_k\}_{k\ge1}$. The word $v$ itself corresponds  to the word
ending by the letter $x_1$, $u=x_0^{n_1-1}x_1\ldots x_0^{n_r-1}x_1\in X^*x_1$. Then it is
usual to index the polylogarithms, harmonic sums and polyzetas by words (over $X$ or $Y$)~:
\begin{eqnarray}
\Li_{{\bf n}}=\Li_{v}(z)=\Li_{u},\\
\P_{{\bf n}}=\P_{v}(z)=\P_{u},\\
\H_{{\bf n}}=\H_{v}(N)=\H_{u},\\
\zeta({\bf n})=\zeta(v)=\zeta(u).
\end{eqnarray}
Let us extend, over $X^*$, the definition of $\{\Li_w\}_{w\in X^*x_1}$
and $\{\P_w\}_{w\in X^*x_1}$ by putting
\begin{eqnarray}
	\forall k\ge0,&\Li_{x_0^k}(z)=\Frac{\log^k(z)}{k},
	&\P_{x_0^k}(z)=\Frac{\Li_{x_0^k}(z)}{1-z}.
\end{eqnarray}
We get the following structures~:

\begin{theorem}[\cite{FPSAC98}]
$(\C\{\Li_w\}_{w\in X^*},.)\cong(\C\langle X\rangle,\shuffle)$.
\end{theorem}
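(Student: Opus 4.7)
The plan is to define the $\C$-linear map $\phi : \CX \to \C\{\Li_w\}_{w\in X^*}$ by $\phi(w)=\Li_w$ (with the extended definition $\Li_{x_0^k}(z)=\log^k(z)/k!$ for words ending in $x_0$) and show that it is an isomorphism of algebras from $(\CX,\shuffle)$ onto the subalgebra of functions $(\C\{\Li_w\}_{w\in X^*},\cdot)$ under pointwise multiplication. Surjectivity is immediate from the definition, so the content lies in the multiplicativity $\Li_u\cdot\Li_v=\Li_{u\shuffle v}$ and in the linear independence of the family $\{\Li_w\}_{w\in X^*}$.

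For multiplicativity, I would first treat words $u,v\in X^*x_1$, where $\Li_w(z)=\alpha_0^z(w)$ is a genuine iterated integral along $0\path z$. The shuffle identity then follows from the classical Chen/Ree argument: expanding the product of two iterated integrals by Fubini and decomposing the simplex of integration variables into the union of simplices indexed by shuffles of the two orderings gives exactly $\alpha_0^z(u)\alpha_0^z(v)=\alpha_0^z(u\shuffle v)$. To extend to all of $X^*$, I would check compatibility with the definition of $\Li_{x_0^k}$: since $x_0^{\shuffle k}=k!\,x_0^k$ and $\Li_{x_0}(z)=\log z$, we have $\Li_{x_0}^k=(\log z)^k=k!\,\Li_{x_0^k}$, matching the shuffle. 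The general mixed case follows because any $w\in X^*$ decomposes uniquely as $x_0^{k}\cdot w'$ with $w'\in X^*x_1\cup\{1_{X^*}\}$, and the derivations $z\,d/dz$ and $(1-z)\,d/dz$ strip letters $x_0,x_1$ off the left, so an induction on length (using the recursive shuffle relations of the statement) reduces the identity in full generality to the convergent case already handled.

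For linear independence, suppose a nontrivial relation $\sum_w c_w\Li_w(z)\equiv 0$ holds on a simply connected neighbourhood of $0$ in the Riemann surface of $\C\setminus\{0,1\}$. I would proceed by induction on the maximal length $N$ of words with $c_w\neq 0$, using the two commuting ``letter-stripping'' operations: applying $\theta_0:=z\,d/dz$ sends $\Li_{x_0u}$ to $\Li_u$, and $\theta_1:=(1-z)\,d/dz$ sends $\Li_{x_1u}$ to $\Li_u$; iterating an appropriate sequence $\theta_{i_1}\cdots\theta_{i_N}$ isolates the coefficient $c_{x_{i_N}\cdots x_{i_1}}$ from the relation. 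Equivalently, the Chen series $L(z)=\sum_w\Li_w(z)\,w$ is a group-like solution of the Drinfel'd equation \eqref{drinfeld} with prescribed asymptotics at $0$, and its coefficients, being the components of a solution on the free associative algebra, are necessarily linearly independent over $\C$.

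The main obstacle is this last linear-independence step: multiplicativity reduces, once the extension to $x_0^k$ is rigged correctly, to a standard combinatorial manipulation of simplices, but showing that no accidental relation occurs between polylogarithms of different words requires a genuine analytic input, most naturally the unique solvability of the KZ equation \eqref{drinfeld} with its specified behaviour near the singularities $0$ and $1$, or equivalently the freeness of $X^*$ acting on the fundamental solution.
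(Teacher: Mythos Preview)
The paper does not supply a proof of this theorem; it is quoted with a citation to \cite{FPSAC98} and immediately followed by its stuffle analogue and a list of consequences. So there is no argument in the text to compare against, and your proposal should be read as a reconstruction of the cited proof.

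Your multiplicativity step is sound and is indeed the standard Chen--Ree argument: for $u,v\in X^*x_1$ the simplicial decomposition gives $\alpha_0^z(u)\alpha_0^z(v)=\alpha_0^z(u\shuffle v)$, and the extension to powers of $x_0$ via $\Li_{x_0^k}=\log^k(z)/k!$ is consistent with $x_0^{\shuffle k}=k!\,x_0^k$. The inductive reduction of the mixed case through the recursive shuffle relation is also fine.

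The linear-independence step, however, has a genuine gap. Your operators $\theta_0=z\,d/dz$ and $\theta_1=(1-z)\,d/dz$ do \emph{not} act as letter-stripping projections: $\theta_0\Li_{x_1u}=\dfrac{z}{1-z}\Li_u$ and $\theta_1\Li_{x_0u}=\dfrac{1-z}{z}\Li_u$, so applying a sequence $\theta_{i_1}\cdots\theta_{i_N}$ to a $\C$-linear relation does not isolate a single coefficient; it produces a relation with rational-function coefficients in which many words survive. The induction as you describe it therefore does not close over $\C$-linear combinations.

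What does work, and what the paper alludes to immediately after the statement, is to prove the stronger $\calC$-linear independence with $\calC=\C[z,z^{-1},(1-z)^{-1}]$: starting from a putative relation $\sum_w f_w(z)\Li_w(z)=0$ with $f_w\in\calC$, differentiation yields another $\calC$-relation whose top-length part has coefficients $f_w'$, and one argues by induction that all top-length $f_w$ must be constant, then uses the distinct asymptotic behaviours of the $\Li_w$ near the singular points (equivalently, the monodromy of $\mathrm{L}(z)$ around $0$ and $1$, or the boundary condition $\mathrm{L}(z)\sim e^{x_0\log z}$) to force these constants to vanish. Your alternative appeal to group-likeness of $\mathrm{L}(z)$ is a red herring: being a group-like solution of (\ref{drinfeld}) does not by itself imply that the coefficient functions are $\C$-independent --- it is precisely the asymptotic normalisation at $0$ (or $1$) that rules out hidden linear relations, and that analytic input needs to be made explicit.
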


\begin{theorem}[\cite{words03}]
$(\C\{\P_w\}_{w\in Y^*},\odot)\cong(\C\langle Y\rangle,\stuffle)$.
\end{theorem}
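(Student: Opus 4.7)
The plan is to exhibit a surjective algebra morphism $\Phi\colon(\C\langle Y\rangle,\stuffle)\to(\C\{\P_w\}_{w\in Y^*},\odot)$ sending $w\mapsto\P_w$, and then to establish its injectivity. Let me sketch each of these three steps.

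First I would show that the harmonic sum map $w\mapsto\H_w(N)$, extended linearly, is a morphism of algebras from $(\C\langle Y\rangle,\stuffle)$ to the commutative algebra $\C^{\N}$ equipped with pointwise multiplication. This is proved by induction on the combined length of two words $u=y_iu'$ and $v=y_jv'$, by splitting the double sum $\H_u(N)\H_v(N)=\sum_{k_1,k_2\le N}\cdots$ according to whether the largest summation index belongs to the $u$-side ($k_1>k_2$), the $v$-side ($k_1<k_2$), or is shared ($k_1=k_2$, in which case the exponents add and one obtains a $y_{i+j}$-factor). This trichotomy is exactly the recursive definition of $\stuffle$, so one gets
\begin{equation}
\H_{u\stuffle v}(N)=\H_u(N)\,\H_v(N).
\end{equation}
Passing to ordinary generating series via $\P_w(z)=\sum_{N\ge 0}\H_w(N)z^N$ and noting that Hadamard (componentwise) product on power series coincides with pointwise product on the sequences of coefficients, one obtains $\P_{u\stuffle v}=\P_u\odot\P_v$. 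Hence $\Phi$ is an algebra morphism, and it is surjective by the very definition of its codomain.

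The remaining and most delicate step is injectivity: one must prove that the family $(\P_w)_{w\in Y^*}$ is $\C$-linearly independent, equivalently that $(\H_w(\cdot))_{w\in Y^*}$ is linearly independent as a family of sequences. The standard route is to distinguish words by asymptotic behaviour as $N\to\infty$: a word $w=y_{n_1}\ldots y_{n_r}$ with $n_1\ge 2$ yields $\H_w(N)\to\zeta(w)$, whereas a word starting with $y_1$ produces a controlled power of $\log(N)$ with leading coefficient determined by the number of initial $y_1$'s. A graded argument, peeling off the dominant $\log^a(N)\,N^b$-asymptotics class by class (and invoking within each class the known linear independence of polylogarithms established in the preceding theorem via their shuffle structure), then forces any hypothetical linear relation among the $\P_w$ to be trivial.

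The main obstacle is precisely this linear independence; the algebraic part (stuffle morphism property) is a routine induction, but controlling the asymptotic expansions of harmonic sums uniformly in the word length requires either a transfer from the shuffle-side independence of the polylogarithms $\Li_w$ (via the formal change of alphabet $x_0^{n-1}x_1\leftrightarrow y_n$ and the identity $\P_w=\Li_w/(1-z)$) or a direct analysis of the Newton--Girard-type expansion of $\H_w(N)$. I would favour the transfer approach, since the shuffle-side result is already cited in the preceding theorem, reducing the present statement to transporting linear independence across the well-defined bijection $Y^*\to X^*x_1\cup\{1\}$.
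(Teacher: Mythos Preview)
The paper does not supply its own proof of this theorem: it is stated with a citation to \cite{words03} and then immediately followed by a list of consequences, so there is nothing in the present text to compare your argument against.

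That said, your sketch is the standard one and is correct. The stuffle-morphism identity $\H_u(N)\H_v(N)=\H_{u\stuffle v}(N)$ via the trichotomy $k_1>k_2$, $k_1<k_2$, $k_1=k_2$ is exactly how the quasi-shuffle product arises, and passing to ordinary generating series turns pointwise product into the Hadamard product $\odot$, giving the morphism $\Phi$. For injectivity, your transfer approach is the cleanest: under the bijection $y_{n_1}\cdots y_{n_r}\mapsto x_0^{n_1-1}x_1\cdots x_0^{n_r-1}x_1$ one has $\P_w=(1-z)^{-1}\Li_{\pi(w)}$, and since multiplication by the fixed nonzero function $(1-z)^{-1}$ preserves $\C$-linear independence, the result follows directly from Theorem~1. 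The asymptotic route you mention first would also work but is unnecessarily heavy here; you are right to prefer the transfer.
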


Extended to $\calC=\C[z,z^{-1},(1-z)^{-1}]$, we also get as a consequence
\begin{itemize}
\item The polylogarithms $\{\Li_w\}_{w\in X^*}$ (resp. $\{\P_w\}_{w\in Y^*}$)
are $\calC$-linearly independent.
Then the harmonic sums $\{\H_w\}_{w\in Y^*}$ are linearly independent.
\item The polylogarithms $\{\Li_l\}_{l\in\Lyn X}$ (resp. $\{\P_l\}_{l\in\Lyn Y}$)
are $\calC$-algebraically independent.
Then the harmonic sums $\{\H_l\}_{l\in\Lyn Y}$ are algebraically independent.
\item The polyzetas $\{\zeta(l)\}_{l\in\Lyn X\setminus\{x_0,x_1\}(\mbox{resp. }\Lyn Y\setminus\{y_1\})}$,
are generators of $\calZ$.
\end{itemize}

\begin{definition}
We put
\begin{eqnarray*}
\mathrm{L}(z):=\Sum_{w\in X^*}\Li_{w}(z)\;w&\mbox{and}&
\H(N):=\Sum_{w\in Y^*}\H_w(N)\;w.
\end{eqnarray*}
\end{definition}

The noncommuting generating series of the polylogarithms
is a solution of the differential equation (\ref{drinfeld})
with the following boundary condition
\begin{eqnarray}\label{boundarycondition}
\mathrm{L}(z)\;{}_{\widetilde{z\rightarrow0}}\;\exp(x_0\log z).
\end{eqnarray}
It follows that $\mathrm{L}$ is group-like and then $\H$ is also group-like, {\it i.e.}
\begin{eqnarray}
\Delta_{\shuffle}(\mathrm{L})=\mathrm{L}\otimes\mathrm{L}
&\mbox{and}&
\Delta_{\stuffle}(\H)=\H\otimes\H,
\end{eqnarray}
and we have
\begin{theorem}\label{grouplike}
\begin{eqnarray*}
\mathrm{L}(z)=e^{x_1\log\frac1{1-z}}\mathrm{L}_{\reg}(z)e^{x_0\log z}&\mbox{and}&
\H(N)=e^{\H_1(N)\;y_1}\H_{\reg}(N),
\end{eqnarray*}
where, denoting by $\{\hat l\}_{\Lyn X}$ (resp. $\{\hat l\}_{\Lyn Y}$)
the dual basis of ${\Lyn X}$ (resp. ${\Lyn Y}$),
\begin{eqnarray*}
\mathrm{L}_{\reg}(z):=\Prod_{l\in\Lyn X, l\neq x_0,x_1}^{\searrow}e^{\Li_l(z)\;\hat l}
&\mbox{and}&
\H_{\reg}(N):=\Prod_{l\in\Lyn Y, l\neq y_1}^{\nearrow}e^{\H_l(N)\;\hat l}.
\end{eqnarray*}
\end{theorem}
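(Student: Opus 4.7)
The plan is to derive both factorizations from the structural fact, recalled just before the statement, that $\mathrm{L}(z)$ and $\H(N)$ are group-like for $\Delta_{\shuffle}$ and $\Delta_{\stuffle}$ respectively, together with the PBW--Lyndon decompositions of the underlying free Lie algebras. By the duality between the shuffle Hopf algebra $(\C\langle X\rangle,\shuffle)$ and the concatenation Hopf algebra, a series is group-like for $\Delta_{\shuffle}$ if and only if it is the exponential of a Lie series in $\LCXX$. Expanding such a Lie exponential in the PBW basis indexed by Lyndon words gives a unique ordered factorization
\begin{equation*}
\mathrm{L}(z) \;=\; \prod_{l\in\Lyn X}^{\searrow} e^{c_l(z)\,\hat l},
\end{equation*}
where $\{\hat l\}_{l\in\Lyn X}$ is the dual Lyndon basis introduced in the statement and the $c_l(z)$ are scalar functions. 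The coefficients are identified by the shuffle morphism theorem stated earlier: since $\Li_\bullet\colon(\C\langle X\rangle,\shuffle)\to(\C\{\Li_w\},\cdot)$ is an isomorphism of algebras, pairing $\mathrm{L}(z)$ against the Lyndon word $l$ yields $c_l(z)=\Li_l(z)$.

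Next I would extract the two singular Lyndon letters. Among the $\Li_l$, the only functions that are not analytic at $z=0$ or $z=1$ come from the extensions $\Li_{x_0}(z)=\log z$ and $\Li_{x_1}(z)=\log\frac{1}{1-z}$. The boundary condition \pref{boundarycondition}, which asserts $\mathrm{L}(z)\sim\exp(x_0\log z)$ as $z\to 0$, forces the factor $e^{x_0\log z}$ to sit at the extreme right of the $\searrow$-ordered product; the analogous asymptotics at $z\to 1$ (obtained by the symmetry $z\mapsto 1-z$, which exchanges $x_0\leftrightarrow x_1$ and reverses the order) puts $e^{x_1\log\frac{1}{1-z}}$ at the extreme left. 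Pulling out these two factors leaves precisely
\begin{equation*}
\mathrm{L}_{\reg}(z)\;=\;\prod_{l\in\Lyn X,\,l\neq x_0,x_1}^{\searrow} e^{\Li_l(z)\,\hat l},
\end{equation*}
establishing the first identity.

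For $\H(N)$ I would run the same programme over $Y$: group-likeness under $\Delta_{\stuffle}$ together with a PBW--Lyndon decomposition of the free Lie algebra $\LCXX$-analogue associated with the stuffle gives an ordered factorization indexed by $\Lyn Y$, whose coefficients are identified as $\H_l(N)$ by the stuffle morphism theorem. The only harmonic sum that diverges as $N\to\infty$ is $\H_{y_1}(N)=\sum_{k=1}^N 1/k$, so $y_1$ is the unique singular Lyndon word; it sits at the first position of the $\nearrow$-ordered product since $y_1$ is the smallest letter of $Y$, producing $\H(N)=e^{\H_1(N)\,y_1}\,\H_{\reg}(N)$.

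The main obstacle will be the clean application of the factorization theorem for group-like series in its infinite-product form: the products over $\Lyn X$ and $\Lyn Y$ are formal, and one has to justify convergence in the completion by the usual weight filtration (each homogeneous component being a finite sum), as well as the uniqueness of the factorization that makes the coefficient identification $c_l=\Li_l$ (resp.\ $\H_l$) well defined. The delicate point in the bookkeeping is matching the $\searrow$ and $\nearrow$ orderings to the actual boundary behaviour, so that the singular exponentials separate on the correct side of the product; this is where the symmetry between $z\to 0$ and $z\to 1$, and the minimality of $y_1$ in $\Lyn Y$, play their decisive roles.
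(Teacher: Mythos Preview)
The paper does not supply a detailed proof of this theorem: it simply notes that $\mathrm{L}(z)$ and $\H(N)$ are group-like and then states the factorizations, referring implicitly to the standard PBW--Lyndon factorization of group-like series (as in Ree's theorem and \cite{FPSAC98,words03}). Your proposal spells out exactly this standard argument, so in substance you are on the same route as the paper.

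One point in your write-up is slightly mis-argued, though. You say the boundary condition $\mathrm{L}(z)\sim e^{x_0\log z}$ ``forces'' the factor $e^{x_0\log z}$ to sit on the right, and you invoke the $z\mapsto 1-z$ symmetry to place $e^{x_1\log\frac{1}{1-z}}$ on the left. That is not where the positions come from. Once you have the ordered product $\prod^{\searrow}_{l\in\Lyn X}e^{\Li_l(z)\hat l}$, the placement of the two extremal factors is dictated purely by the lexicographic order on $\Lyn X$: with $x_0<x_1$, every Lyndon word other than $x_0$ and $x_1$ begins with $x_0$ and ends with $x_1$, so $x_0$ is the minimum and $x_1$ the maximum of $\Lyn X$. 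Hence in the decreasing product the $x_1$-factor is leftmost and the $x_0$-factor is rightmost, automatically. The boundary condition \pref{boundarycondition} is used upstream to single out $\mathrm{L}$ among all exponential solutions (Proposition~\ref{sol}) and hence to justify that it is group-like; it is not needed again to locate the factors. Likewise, invoking the $z\mapsto 1-z$ symmetry here is circular, since in the paper that symmetry (equation~\pref{sigma}) is derived \emph{after} the present theorem. The same remark applies to $\H(N)$: $y_1$ sits leftmost in $\prod^{\nearrow}$ simply because $y_1$ is the minimum of $\Lyn Y$, which you do say, so there the argument is clean.
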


\begin{corollary}
Let $z_0\path z$ be a differentiable path on $\C-\{0,1\}$ such
that $\mathrm{L}$ admits an analytic continuation along this path.
We have $S_{z_0\path z}=\mathrm{L}(z)\mathrm{L}(z_0)^{-1}$.
\end{corollary}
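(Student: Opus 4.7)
The plan is to exhibit $\mathrm{L}(z)\mathrm{L}(z_0)^{-1}$ as a solution of the Drinfel'd equation (\ref{drinfeld}) sharing the initial condition of the Chen series $S_{z_0 \path z}$, and then conclude by the rigidity furnished by Proposition \ref{sol}.

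First I would verify that $\mathrm{L}(z)\mathrm{L}(z_0)^{-1}$ really is an exponential solution of (\ref{drinfeld}) along the prescribed path. Since $\mathrm{L}(z_0)^{-1}$ is a constant (with respect to $z$), and since $\mathrm{L}(z)$ satisfies $d\mathrm{L}(z)=[x_0\omega_0(z)+x_1\omega_1(z)]\mathrm{L}(z)$ wherever the analytic continuation is defined, the right factor does not interfere with left-differentiation; therefore
\begin{eqnarray*}
d\bigl[\mathrm{L}(z)\mathrm{L}(z_0)^{-1}\bigr]
=[x_0\omega_0(z)+x_1\omega_1(z)]\,\mathrm{L}(z)\mathrm{L}(z_0)^{-1}.
\end{eqnarray*}
The group-like character of $\mathrm{L}$ (Theorem \ref{grouplike}) passes to its inverse (the antipode of a group-like element is its inverse) and to the product, so $\mathrm{L}(z)\mathrm{L}(z_0)^{-1}$ is indeed an exponential, i.e. a Lie exponential, solution.

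Next I would apply Proposition \ref{sol} to the two exponential solutions $S_{z_0\path z}$ and $\mathrm{L}(z)\mathrm{L}(z_0)^{-1}$: there exists a Lie series $C\in\LCXX$ with
\begin{eqnarray*}
S_{z_0\path z}=\mathrm{L}(z)\mathrm{L}(z_0)^{-1}\exp(C).
\end{eqnarray*}
To pin down $C$, evaluate at $z=z_0$. On one side, the Chen generating series satisfies $S_{z_0\path z_0}=1$; on the other, $\mathrm{L}(z_0)\mathrm{L}(z_0)^{-1}=1$. Hence $\exp(C)=1$, so $C=0$ and the asserted equality $S_{z_0\path z}=\mathrm{L}(z)\mathrm{L}(z_0)^{-1}$ drops out.

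The only genuinely delicate point is ensuring that all of the manipulations above are legitimate on the full analytic continuation of $\mathrm{L}$ along $z_0\path z$: one needs the Drinfel'd equation, the invertibility of $\mathrm{L}(z_0)$, and the group-like identity to be preserved under monodromy. This is guaranteed by the hypothesis that $\mathrm{L}$ admits an analytic continuation along the path (so $\mathrm{L}(z_0)$ and $\mathrm{L}(z)$ live in the same branch) together with the fact that group-likeness is a polynomial identity in the coefficients of the series, hence stable under analytic continuation. Everything else is formal, so I expect no further obstacle.
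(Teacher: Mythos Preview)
Your argument is correct and follows essentially the same approach as the paper's own proof: show that $\mathrm{L}(z)\mathrm{L}(z_0)^{-1}$ is a group-like solution of (\ref{drinfeld}) matching $S_{z_0\path z}$ at $z_0$, and conclude by uniqueness. The paper is simply terser, not invoking Proposition~\ref{sol} by name, whereas you spell out the intermediate constant $\exp(C)$ and the evaluation at $z_0$ explicitly.
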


\begin{proof}
By Theorem \ref{grouplike}, $\mathrm{L}(z)$ is group-like. Hence $\mathrm{L}(z_0)$ is also group-like
as is $\mathrm{L}(z_0)^{-1}$. Since the power series $\mathrm{L}(z)\mathrm{L}(z_0)^{-1}$ and $S_{z_0\path z}$
satisfy (\ref{drinfeld}) taking the same value at $z_0$ then we get the expected result.
\end{proof}

\begin{definition}
We put
\begin{eqnarray*}
Z_{\shuffle}:=\mathrm{L}_{\reg}(1)&\mbox{and}&Z_{\stuffle}:=\H_{\reg}(\infty).
\end{eqnarray*}
\end{definition}
These two noncommuting generating series $Z_{\shuffle}$ and $Z_{\stuffle}$
induce the two following regularization morphisms respectively~:

\begin{theorem}[\cite{SLC44}]\label{reg1}
Let $\zeta_{\ministuffle}:(\CYY,\ministuffle)\rightarrow(\R,.)$ be the morphism satisfying
\begin{itemize}
\item for $u,v\in Y^*,\zeta_{\ministuffle}(u\ministuffle v)
=\zeta_{\ministuffle}(u)\zeta_{\ministuffle}(v)$,
\item for all convergent words $w\in Y^*- y_1Y^*,\zeta_{\ministuffle}(w)=\zeta(w)$,
\item $\zeta_{\ministuffle}(y_1)=0$.
\end{itemize}
Then
$$\sum_{w\in X^*}\zeta_{\ministuffle}(w)\;w=Z_{\ministuffle}.$$
\end{theorem}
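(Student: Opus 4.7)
The strategy is to define the linear functional $\Phi : \CYY \to \R$ by $\Phi(w) := \langle Z_{\ministuffle} \mid w\rangle$ (coefficient of $w$ in $Z_{\ministuffle}$), verify that it satisfies the three axioms characterising $\zeta_{\ministuffle}$, and invoke uniqueness. Uniqueness itself rests on the Radford-type decomposition $(\CY, \ministuffle) \simeq (\C[y_1],\cdot)\otimes_{\C} \mathcal{A}_{\mathrm{conv}}$, where $\mathcal{A}_{\mathrm{conv}}$ is the stuffle subalgebra generated by the convergent words $Y^*-y_1 Y^*$: because a morphism from $(\CY,\ministuffle)$ is forced by its values on this generating set, prescribing $\zeta_{\ministuffle}(y_1)=0$ and $\zeta_{\ministuffle}(w)=\zeta(w)$ on convergent $w$ determines $\zeta_{\ministuffle}$ completely. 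Then $Z_{\ministuffle}=\sum_{w} \Phi(w)\,w = \sum_w \zeta_{\ministuffle}(w)\,w$ follows by identification of coefficients.

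The first step is to establish that $\Phi$ is a $\ministuffle$-morphism, equivalently that $Z_{\ministuffle}$ is group-like for $\Delta_{\ministuffle}$. From Theorem~\ref{grouplike} the finite-$N$ series $\H(N)$ is group-like, and the letter $y_1$ is primitive for $\Delta_{\ministuffle}$ (there is no decomposition $i+j=1$ with $i,j\ge 1$), so $e^{-\H_1(N)\,y_1}$ is itself group-like. Consequently $\H_{\reg}(N)=e^{-\H_1(N)\,y_1}\H(N)$ is group-like as a product of group-like elements, and passing to the limit $Z_{\ministuffle}=\H_{\reg}(\infty)$ preserves group-likeness coefficient-by-coefficient (each component of $\Delta_{\ministuffle}$ on a fixed word involves only finitely many coefficients).

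The second step is the direct evaluation of $\Phi$ on convergent words and on $y_1$. Expanding
\begin{eqnarray*}
\H_{\reg}(N)=\sum_{k\ge 0}\frac{(-\H_1(N))^k}{k!}\,y_1^k\,\H(N),
\end{eqnarray*}
and taking the concatenation coefficient of a word $w\in Y^*-y_1Y^*$, only the $k=0$ term survives because $w$ does not begin with $y_1$; thus $\langle \H_{\reg}(N)\mid w\rangle = \H_w(N)\to \zeta(w)$ by Abel's theorem. For $w=y_1$, both $k=0$ and $k=1$ contribute and produce $\H_1(N)-\H_1(N)\cdot 1=0$, independently of $N$. Hence $\Phi(w)=\zeta(w)$ on convergent words and $\Phi(y_1)=0$.

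The point I expect to be the real obstacle is the convergence of $\H_{\reg}(N)$ on \emph{every} word (not merely the convergent ones), so that $Z_{\ministuffle}=\H_{\reg}(\infty)$ is meaningful. The cleanest route is to use the already-verified morphism property of $\Phi$ at finite $N$ together with the Radford decomposition: any $w\in Y^*$ can be rewritten as a polynomial in $y_1$ whose coefficients are $\ministuffle$-polynomials in convergent words, and applying $\langle \H_{\reg}(N)\mid\cdot\rangle$ — which is a $\ministuffle$-morphism at each $N$ with $\langle \H_{\reg}(N)\mid y_1\rangle=0$ — reduces $\langle\H_{\reg}(N)\mid w\rangle$ to a finite polynomial in values $\H_{w'}(N)$ with $w'$ convergent. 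Each such $\H_{w'}(N)$ converges, whence so does $\langle \H_{\reg}(N)\mid w\rangle$, and the limiting coefficient matches $\zeta_{\ministuffle}(w)$ by construction. This simultaneously justifies the existence of $Z_{\ministuffle}$ and closes the identification $Z_{\ministuffle}=\sum_{w\in Y^*}\zeta_{\ministuffle}(w)\,w$.
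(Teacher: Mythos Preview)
The paper does not actually supply a proof of this theorem: it is stated with the citation \cite{SLC44} and nothing more, so there is no in-paper argument to compare against. That said, your proof is correct and is precisely the kind of argument one would expect in the cited reference. A minor observation: the paper's Theorem~\ref{grouplike} already records the explicit Lyndon factorisation
\[
\H_{\reg}(N)=\Prod_{l\in\Lyn Y,\ l\neq y_1}^{\nearrow}e^{\H_l(N)\,\hat l},
\]
which gives a slightly more direct route than your Radford decomposition: passing to the limit yields $Z_{\ministuffle}=\prod_{l\neq y_1}e^{\zeta(l)\hat l}$ immediately (each $\H_l(N)\to\zeta(l)$ since $l\neq y_1$ is convergent), and this product form is manifestly group-like with the correct values on Lyndon generators, hence coincides with $\zeta_{\ministuffle}$ by freeness of $(\CY,\stuffle)$ on $\Lyn Y$. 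Your argument and this one are reparametrisations of the same idea --- yours via the splitting $\C[y_1]\otimes\mathcal{A}_{\mathrm{conv}}$, the alternative via the full Lyndon basis --- and both are sound. (Note also the harmless typo in the statement: the sum should run over $Y^*$, not $X^*$.)
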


\begin{corollary}[\cite{SLC44}]\label{zetareg1}
For any $w\in X^*, \zeta_{\ministuffle}(w)$ belongs to the algebra $\calZ$.
\end{corollary}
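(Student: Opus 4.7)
My plan is to prove the statement by induction on $\nu(w)$, the number of leading $y_1$ letters of $w$, exploiting the $\stuffle$-morphism property of $\zeta_\ministuffle$ together with the defining value $\zeta_\ministuffle(y_1) = 0$.

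For the base case $\nu(w)=0$, either $w$ is empty or $w \in Y^* - y_1 Y^*$ is a convergent word; Theorem \ref{reg1} then gives $\zeta_\ministuffle(w) = \zeta(w) \in \calZ$ directly.

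For the inductive step, write $w = y_1^k w'$ with $\nu(w') = 0$ and $k \ge 1$. The key algebraic input, which I would establish separately, is the triangularity
\begin{eqnarray*}
y_1^{\stuffle k} \stuffle w' = k!\, y_1^k w' + R,
\end{eqnarray*}
where every word appearing in $R$ has strictly fewer than $k$ leading $y_1$'s. Applying $\zeta_\ministuffle$, and using $\zeta_\ministuffle(y_1^{\stuffle k}\stuffle w') = \zeta_\ministuffle(y_1)^k\,\zeta_\ministuffle(w') = 0$, we deduce
\begin{eqnarray*}
\zeta_\ministuffle(w) = -\frac{1}{k!}\,\zeta_\ministuffle(R),
\end{eqnarray*}
which lies in $\calZ$ by the inductive hypothesis applied term by term to $R$.

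The principal obstacle is the triangularity identity itself, which I would prove in two stages. First, $y_1^{\stuffle k} = k!\,y_1^k + (\mbox{words with fewer than $k$ leading $y_1$'s})$ by induction on $k$, using the recursion $y_1 u \stuffle y_1 v = y_1(u\stuffle y_1 v) + y_1(y_1 u\stuffle v) + y_2(u\stuffle v)$, in which only the ``merge'' branch $y_2(u\stuffle v)$ strictly reduces the leading $y_1$-run while the other two branches preserve it. Second, for $w'$ with $\nu(w')=0$, $y_1^k \stuffle w' = y_1^k w' + (\mbox{words with fewer leading $y_1$'s})$: iteratively peeling off the leading $y_1$'s from $y_1^k$, the two ``non-$y_1$-leading'' branches of the stuffle recursion always lower the leading $y_1$-count (since the new first letter has index $\ge 2$), while the ``$y_1$-leading'' branch preserves it and drives the induction. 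Combining these two statements yields the desired triangularity, whence the corollary.
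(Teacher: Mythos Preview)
Your argument is correct. The triangularity you isolate holds; the only step left implicit is that, when you combine your two claims, the cross-term (the stuffle of the remainder $S$ in $y_1^{\stuffle k}=k!\,y_1^k+S$ against $w'$) also consists of words with fewer than $k$ leading $y_1$'s. This follows from the subadditivity $\nu(u\stuffle v)\le\nu(u)+\nu(v)$, which is proved by the very same branch analysis of the stuffle recursion you already invoke for your second claim; so there is no real gap.

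The paper itself gives no explicit proof but treats the statement as an immediate corollary of Theorem~\ref{reg1} together with the Lyndon factorization in Theorem~\ref{grouplike}: since
\[
\sum_{w}\zeta_{\ministuffle}(w)\,w \;=\; Z_{\ministuffle}
\;=\;\prod_{l\in\Lyn Y,\ l\neq y_1}^{\nearrow} e^{\zeta(l)\,\hat l},
\]
and each $\hat l$ lies in $\Q\langle Y\rangle$, every coefficient $\zeta_{\ministuffle}(w)$ is automatically a polynomial with rational coefficients in the convergent values $\{\zeta(l):l\in\Lyn Y,\ l\neq y_1\}$, hence lies in $\calZ$. So the paper's route is structural, via the group-like factorization of $Z_{\ministuffle}$, while yours is an elementary recursive reduction that never mentions Lyndon words or PBW bases. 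Your approach buys an explicit, easily implementable algorithm expressing $\zeta_{\ministuffle}(w)$ as a $\Q$-linear combination of convergent polyzetas; the paper's buys a one-line argument once the heavier Theorem~\ref{grouplike} is in place.

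(Incidentally, the printed statement has $w\in X^*$, evidently a slip for $Y^*$; you correctly work over $Y$.)
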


\begin{theorem}[\cite{SLC44}]\label{reg2}
Let $\zeta_{\minishuffle}:(\CXX,\minishuffle)\rightarrow(\R,.)$ be the morphism verifying
\begin{itemize}
\item for $u,v\in X^*,\zeta_{\minishuffle}(u\minishuffle v)
=\zeta_{\minishuffle}(u)\zeta_{\minishuffle}(v)$,
\item for all convergent word $w\in x_0X^*x_1,\zeta_{\minishuffle}(w)=\zeta(w)$,
\item $\zeta_{\minishuffle}(x_0)=\zeta_{\minishuffle}(x_1)=0$.
\end{itemize}
Then
$$\sum_{w\in X^*}\zeta_{\minishuffle}(w)\;w=Z_{\minishuffle}.$$
\end{theorem}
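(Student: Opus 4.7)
The plan is to prove that the coefficient functional $w \mapsto \langle Z_{\minishuffle}, w \rangle$ coincides with $\zeta_{\minishuffle}$ by verifying that both are shuffle characters on $\C\langle X \rangle$ and that they agree on a set of algebra generators. Since $(\C\langle X \rangle, \shuffle)$ is freely generated, as a commutative algebra, by the Lyndon words of $X^*$ (Radford's theorem), it suffices to check the equality on $\Lyn X$.

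First I would show that $\langle Z_{\minishuffle}, \cdot \rangle$ is a $\shuffle$-character, equivalently that $Z_{\minishuffle}$ is group-like for $\Delta_{\shuffle}$. By Theorem \ref{grouplike}, $\mathrm{L}_{\reg}(z) = \prod^{\searrow}_{l \in \Lyn X \setminus \{x_0, x_1\}} e^{\Li_l(z)\, \hat l}$, where each $\hat l$ lies in the free Lie algebra $\LCXX$ and is therefore primitive for $\Delta_{\shuffle}$. Exponentials of primitive elements are group-like, and ordered products of group-like elements remain group-like, so $\mathrm{L}_{\reg}(z)$ is group-like for every $z$ in its domain. Because every Lyndon word $l \neq x_0, x_1$ lies in $x_0 X^* x_1$, the polylogarithm $\Li_l$ converges at $1$, hence the product defining $\mathrm{L}_{\reg}(1)$ is well defined coefficient-by-coefficient, and group-likeness passes to the limit $z \to 1$, yielding that $Z_{\minishuffle}$ is group-like.

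Next I would compute the values on Lyndon generators. Over the two-letter alphabet $\{x_0 < x_1\}$, every Lyndon word of length $\geq 2$ begins with $x_0$ and ends with $x_1$, i.e.\ belongs to $x_0 X^* x_1$; the only Lyndon words of length $1$ are $x_0$ and $x_1$ themselves. For any such $l \neq x_0, x_1$, Abel's theorem gives $\langle Z_{\minishuffle}, l \rangle = \lim_{z \to 1}\Li_l(z) = \zeta(l)$, which matches $\zeta_{\minishuffle}(l)$ by the defining convergent-case condition. For the two divergent letters, Theorem \ref{grouplike} has explicitly extracted them in the outer factors $e^{x_1 \log 1/(1-z)}$ and $e^{x_0 \log z}$, so neither appears in $\mathrm{L}_{\reg}(z)$; consequently $\langle Z_{\minishuffle}, x_0 \rangle = \langle Z_{\minishuffle}, x_1 \rangle = 0$, again matching $\zeta_{\minishuffle}$.

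Since both $\zeta_{\minishuffle}$ and $\langle Z_{\minishuffle}, \cdot \rangle$ are $\shuffle$-characters agreeing on every Lyndon word, they agree on all of $\C\langle X \rangle$, which is the claimed identity. The delicate step is the passage to the limit $z \to 1$ in the infinite Lyndon product: one must verify that the product is locally finite in each homogeneous degree (which holds because $\Lyn X$ has finitely many words of each length), so that the limit can be computed degree-wise in a topology in which $\Delta_{\shuffle}$ is continuous and the group-like elements form a closed subgroup. This is the main place where one has to be careful; once it is secured, the algebraic part of the argument is essentially formal.
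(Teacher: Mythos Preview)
The paper does not supply its own proof of Theorem~\ref{reg2}; the result is quoted from \cite{SLC44} and used as a black box. So there is nothing in the paper to compare your argument against line by line.

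Your argument is the standard one and is essentially correct: you use that $Z_{\minishuffle}=\mathrm{L}_{\reg}(1)$ is group-like (an ordered, degree-wise finite product of exponentials of primitive $\hat l$'s), hence $w\mapsto\langle Z_{\minishuffle},w\rangle$ is a $\shuffle$-character; you then match this character with $\zeta_{\minishuffle}$ on the Lyndon generators and invoke Radford's theorem. The observation that every Lyndon word of length $\ge 2$ over $\{x_0<x_1\}$ lies in $x_0X^*x_1$ is correct and is exactly what forces $\langle Z_{\minishuffle},x_0\rangle=\langle Z_{\minishuffle},x_1\rangle=0$.

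One small point worth making explicit: when you write $\langle Z_{\minishuffle},l\rangle=\lim_{z\to1}\Li_l(z)$ for $l\in x_0X^*x_1$, you are using that $\langle \mathrm{L}_{\reg}(z),w\rangle=\langle \mathrm{L}(z),w\rangle=\Li_w(z)$ for every convergent word $w\in x_0X^*x_1$. This follows immediately from the factorization in Theorem~\ref{grouplike}, since the left factor $e^{x_1\log\frac{1}{1-z}}$ only contributes words beginning with $x_1$ and the right factor $e^{x_0\log z}$ only contributes words ending with $x_0$; stating it would close the one implicit step in your write-up. The ``delicate'' limit issue you flag is in fact harmless here precisely because the Lyndon product is locally finite in each homogeneous degree, as you note.
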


\begin{corollary}[\cite{SLC44}]\label{zetareg2}
For any $w\in Y^*,\zeta_{\minishuffle}(w)$ belongs to the algebra $\calZ$.
\end{corollary}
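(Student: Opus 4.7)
The plan is to reduce the problem to checking $\zeta_{\minishuffle}$ on a generating set of the shuffle algebra, namely the Lyndon words, and then to identify the values there. The key structural input is Radford's theorem: $(\CX,\shuffle)$ is a free commutative polynomial algebra on the set of Lyndon words $\Lyn X$. Combined with the fact, stated in Theorem \ref{reg2}, that $\zeta_{\minishuffle}$ is a morphism from $(\CXX,\shuffle)$ to $(\R,\cdot)$, this means that the image $\zeta_{\minishuffle}(\CX)$ is the sub-$\Q$-algebra of $\R$ generated by the numbers $\{\zeta_{\minishuffle}(l)\}_{l\in\Lyn X}$.

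The main steps are then as follows. First, expand the given $w$ as a polynomial in Lyndon words for the shuffle product, $w=\Sum{}{}\,c_{\lambda}\,l_1^{\shuffle k_1}\shuffle\cdots\shuffle l_m^{\shuffle k_m}$, with $c_\lambda\in\Q$ and $l_i\in\Lyn X$; by Radford this decomposition exists uniquely. Second, apply the morphism property to get $\zeta_{\minishuffle}(w)$ as a polynomial in $\{\zeta_{\minishuffle}(l_i)\}$ with rational coefficients. Third, compute each $\zeta_{\minishuffle}(l)$ explicitly: for the two letter-Lyndon words we have $\zeta_{\minishuffle}(x_0)=\zeta_{\minishuffle}(x_1)=0$ by the third clause of Theorem \ref{reg2}; for every Lyndon word $l$ of length $\geq 2$ over $\{x_0<x_1\}$, the standard characterization of Lyndon words forces $l$ to begin with $x_0$ and to end with $x_1$, so $l\in x_0X^*x_1$ is a convergent word, and the second clause of Theorem \ref{reg2} then yields $\zeta_{\minishuffle}(l)=\zeta(l)$, which is one of the generators of $\calZ$ by Definition 1. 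Fourth, conclude: $\zeta_{\minishuffle}(w)$ is a polynomial with rational coefficients in elements of $\calZ$, hence lies in $\calZ$.

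The proof is not really obstructed by any deep new step; the two non-trivial ingredients imported from the literature are Radford's freeness theorem for the shuffle algebra and the elementary lemma that every Lyndon word of length at least two over a two-letter ordered alphabet lies in $x_0X^*x_1$. The only point where one must be a bit careful is to verify that Radford's expansion may be transported through the morphism---this is automatic once the morphism property of Theorem \ref{reg2} is available---and, if one reads the corollary literally with $w\in Y^*$, to use the canonical inclusion $Y^*\hookrightarrow X^*x_1$ given by $y_{n_1}\cdots y_{n_r}\mapsto x_0^{n_1-1}x_1\cdots x_0^{n_r-1}x_1$ before applying the argument above; the reasoning is unchanged.
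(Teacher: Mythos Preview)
Your argument is correct and aligns with the paper's implicit reasoning. The paper does not spell out a proof of this corollary (it is cited to \cite{SLC44}), but the intended argument is immediate from the infrastructure already displayed: by Theorem~\ref{reg2} one has $\zeta_{\minishuffle}(w)=\langle Z_{\minishuffle}\mid w\rangle$, and by definition $Z_{\minishuffle}=\mathrm{L}_{\reg}(1)=\prod_{l\in\Lyn X,\,l\neq x_0,x_1}^{\searrow}e^{\zeta(l)\,\hat l}$, whose expansion has every coefficient a $\Q$-polynomial in the convergent $\zeta(l)$, hence in $\calZ$.

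This is the series-side (Lyndon--PBW factorization of the group-like element) version of exactly what you do on the word side via Radford's theorem. The two are dual formulations of the same structural fact, so there is no genuine difference in method. Your explicit verification that every Lyndon word of length $\ge 2$ over $\{x_0<x_1\}$ lies in $x_0X^*x_1$ is the correct justification for why the values on the Lyndon generators other than $x_0,x_1$ are convergent polyzetas; the paper records this same fact when it lists $\{\zeta(l)\}_{l\in\Lyn X\setminus\{x_0,x_1\}}$ as generators of $\calZ$. Your final remark handling the literal reading $w\in Y^*$ via the canonical embedding $Y^*\hookrightarrow X^*x_1$ is appropriate.
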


\section{Group of associators theorem}

Drinfel'd proved that  Eqn.(\ref{drinfeld}) admits two particular solutions
on the domain $\C\setminus]-\infty,0]\cup[1,+\infty[$,
%\begin{eqnarray}
%G_0(z)\;{}_{\widetilde{z\path0}}\;\exp[x_0\log(z)]
%\mbox{and}
%G_1(z)\;{}_{\widetilde{z\path1}}\;\exp[-x_1\log(1-z)].
%\end{eqnarray}
\begin{eqnarray}
G_0(z)\;{}_{\widetilde{z\path0}}\;\exp[x_0\log(z)]
\ \mbox{and}\
G_1(z)\;{}_{\widetilde{z\path1}}\;\exp[-x_1\log(1-z)].
\end{eqnarray}
He also proved that there exists $\PKZ$ such that
\begin{eqnarray}
G_0(z)=G_1(z)\PKZ.
\end{eqnarray}
L\^e and Murakami expressed the coefficients of the Drinfel'd associator
$\PKZ$ in terms of convergent polyzetas \cite{lemurakami}.

Let $\rho$ be the monoid morphism verifying
\begin{eqnarray}
\rho(x_0)=-x_1&\mbox{and}&\rho(x_1)=-x_0.
\end{eqnarray}
We also have \cite{FPSAC99}
%\begin{eqnarray}\label{sigma}
%\mathrm{L}(z)=\rho[\mathrm{L}(1-z)]Z_{\minishuffle}
%=e^{x_0\log z}\rho[\L_{\reg}(1-z)]e^{-x_1\log(1-z)}Z_{\minishuffle}.
%\end{eqnarray}

\begin{eqnarray}\label{sigma}
\mathrm{L}(z)=\rho[\mathrm{L}(1-z)]Z_{\minishuffle}
=e^{x_0\log z}\rho[\mathrm{L}_{\reg}(1-z)]e^{-x_1\log(1-z)}Z_{\minishuffle}.
\end{eqnarray}
Thus,
\begin{eqnarray}\label{asymptoticbehaviour}
\mathrm{L}(z)\;{}_{\widetilde{z\rightarrow1}}\;\exp(-x_1\log(1-z))\;Z_{\minishuffle}.
\end{eqnarray}
It follows from Eqn.(\ref{boundarycondition}) and Eqn. (\ref{asymptoticbehaviour}),  with reference  \cite{SLC44}, that
%\begin{eqnarray}
%\PKZ=Z_{\minishuffle}(w),
%\end{eqnarray}
\begin{eqnarray}
\PKZ=Z_{\minishuffle}\ ,
\end{eqnarray}
and it is group-like, {\it i.e.}
\begin{eqnarray}
\Delta_{\shuffle}(\PKZ)=\PKZ\otimes\PKZ,
\end{eqnarray}
and it can be graded in the adjoint basis \cite{orlando} as follows
\begin{eqnarray}
\PKZ=\sum_{k\ge0\atop l_1,\cdots,l_k\ge0}
\zeta_{\minishuffle}(x_1x_0^{l_1}\circ\cdots\circ x_1x_0^{l_k})
\Prod_{i=0}^k\ad_{x_0}^{l_i}x_1,
\end{eqnarray}
where, for any $l\in\N$ and $P\in\CX$, $\circ$ is defined by
\begin{eqnarray}
x_1x_0^l\circ P=x_1(x_0^l\shuffle P),
\end{eqnarray}
and $\ad_{x_0}^{l}x_1$ is the iterated Lie bracket
\begin{eqnarray}
\ad_{x_0}^0x_1=x_1&\mbox{and}&
\ad_{x_0}^{l}x_1=[x_0,\ad_{x_0}^{l-1}x_1].
\end{eqnarray}
Using the following expansion \cite{bourbaki3}
\begin{eqnarray}
\ad^n_{x_0}x_1=\sum_{i=0}^n{i\choose n}x_0^{n-i}x_1x_0^i,
\end{eqnarray}
one then gets, via the regularization process of Theorem \ref{reg2},
the expression for the Drinfel'd associator $\Phi_{KZ}$
given by L\^e and Murakami \cite{lemurakami}.

Finally, the asymptotic behaviour of $\mathrm{L}$ on (\ref{asymptoticbehaviour}) leads to
\begin{proposition}[\cite{FPSAC98}]\label{chenregularization}
\begin{eqnarray*}
S_{\eps\path1-\eps}\;{}_{\widetilde{\eps\rightarrow0^+}}\;
e^{-x_1\log\eps}\;Z_{\minishuffle}\;e^{-x_0\log\eps}.
\end{eqnarray*}
\end{proposition}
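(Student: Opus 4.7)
The plan is to apply the Corollary preceding the definition of $Z_{\shuffle}$ to write
\[
S_{\eps\path 1-\eps}=\mathrm{L}(1-\eps)\,\mathrm{L}(\eps)^{-1},
\]
which is legitimate since the real segment from $\eps$ to $1-\eps$ avoids the singular set $\{0,1\}$ along which $\mathrm{L}$ is multivalued. I would then substitute the two asymptotic forms of $\mathrm{L}$ at its singularities: from the boundary condition \pref{boundarycondition}, $\mathrm{L}(\eps)\;{}_{\widetilde{\eps\to 0^+}}\;e^{x_0\log\eps}$, hence $\mathrm{L}(\eps)^{-1}\;{}_{\widetilde{\eps\to 0^+}}\;e^{-x_0\log\eps}$; and from \pref{asymptoticbehaviour}, since $1-(1-\eps)=\eps$, one gets $\mathrm{L}(1-\eps)\;{}_{\widetilde{\eps\to 0^+}}\;e^{-x_1\log\eps}Z_{\shuffle}$. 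Formally multiplying these two equivalences on the right yields the stated formula.

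To make this rigorous I would replace the symbolic equivalences by the exact factorization of Theorem \ref{grouplike}, namely $\mathrm{L}(z)=e^{x_1\log\frac{1}{1-z}}\mathrm{L}_{\reg}(z)e^{x_0\log z}$. Applying it at $z=1-\eps$ gives $\mathrm{L}(1-\eps)=e^{-x_1\log\eps}\,\mathrm{L}_{\reg}(1-\eps)\,e^{x_0\log(1-\eps)}$; applying it at $z=\eps$ and inverting (with $\log\frac{1}{1-\eps}=-\log(1-\eps)$) gives $\mathrm{L}(\eps)^{-1}=e^{-x_0\log\eps}\,\mathrm{L}_{\reg}(\eps)^{-1}\,e^{x_1\log(1-\eps)}$. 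Substituting produces the explicit identity
\begin{equation*}
S_{\eps\path 1-\eps}=e^{-x_1\log\eps}\,\mathrm{L}_{\reg}(1-\eps)\,e^{x_0\log(1-\eps)}\,e^{-x_0\log\eps}\,\mathrm{L}_{\reg}(\eps)^{-1}\,e^{x_1\log(1-\eps)}.
\end{equation*}
As $\eps\to 0^+$, the regularized factors have transparent limits: $\mathrm{L}_{\reg}(1-\eps)\to Z_{\shuffle}$ by the very definition of $Z_{\shuffle}$; $\mathrm{L}_{\reg}(\eps)\to 1$ because every Lyndon polylogarithm $\Li_l$ with $l\in\Lyn X\setminus\{x_0,x_1\}$ vanishes at the origin; and the ``smooth'' exponentials $e^{x_0\log(1-\eps)}$, $e^{x_1\log(1-\eps)}$ both tend to $1$. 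Moreover $e^{x_0\log(1-\eps)}$ commutes with $e^{-x_0\log\eps}$ (both are power series in $x_0$ alone), so the central pair may be reordered freely and the vanishing factor absorbed into the adjacent regularized block.

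The main obstacle is justifying the final ``peeling off'' of $o(1)$ factors in the noncommutative setting: the prefactors $e^{\pm x_i\log\eps}$ act on their neighbours by inner conjugation $e^{\pm\ad_{x_i}\log\eps}$, which, when expanded coefficient-wise on $X^*$, produces polynomials in $\log\eps$ of arbitrarily high degree. What rescues us is that the errors $\mathrm{L}_{\reg}(1-\eps)-Z_{\shuffle}$ and $\mathrm{L}_{\reg}(\eps)^{-1}-1$ vanish polynomially in $\eps$ (each coefficient is a polynomial in convergent polylogs converging to its limit as $O(\eps^\alpha)$ for some $\alpha>0$), while $\eps^\alpha(\log\eps)^\beta\to 0$ for every $\alpha>0$, $\beta\geq 0$. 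Word by word on $X^*$, this controls the discrepancy between $S_{\eps\path 1-\eps}$ and $e^{-x_1\log\eps}\,Z_{\shuffle}\,e^{-x_0\log\eps}$, which is the precise coefficient-wise meaning of the asymptotic equivalence $\widetilde{}$ used throughout \pref{boundarycondition}--\pref{asymptoticbehaviour}.
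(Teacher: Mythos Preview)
Your proof is correct and follows exactly the route the paper indicates: the paper gives no explicit argument here but simply states that the asymptotic behaviour \pref{asymptoticbehaviour} of $\mathrm{L}$ ``leads to'' the proposition (citing \cite{FPSAC98}); combining that with the Corollary $S_{z_0\path z}=\mathrm{L}(z)\mathrm{L}(z_0)^{-1}$ and the boundary condition \pref{boundarycondition} is precisely what you do. Your use of the exact factorization from Theorem~\ref{grouplike} to make the $\widetilde{}$ rigorous, together with the $\eps^{\alpha}(\log\eps)^{\beta}\to 0$ argument, is a welcome expansion of a step the paper leaves implicit.
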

In other words, $Z_{\minishuffle}$ is the regularized Chen generating series
$S_{\eps\path1-\eps}$ of differential forms $\omega_0$ and $\omega_1$~:
$Z_{\minishuffle}$ is the noncommutative generating series of the finite parts of the
Chen generating series $e^{x_1\log\eps}\;S_{\eps\path1-\eps}\;e^{x_0\log\eps}$,
the concatenation of $e^{x_0\log\eps}$ and then $S_{\eps\path1-\eps}$ and finally,
$S_{\eps\path1-\eps}$.

Let $\{b_{n,k}(t_1,\ldots,t_{n-k+1})\}_{n\ge k\ge1\atop n\in\N_+}$
be the Bell polynomials given by the following exponential generating series\footnote{Compare this expression with the less general Eq.(\ref{bellpol1}). Here $b(n,k)$ is a refinement by the number of blocks ({\em i.e.} $k$) of Eq.(\ref{bellpol1}).  Thus one has
$$
B_n(X_1,X_2,\cdots X_n)=\sum_{k=0}^n b_{n,k}(X_1,\ldots,X_{n-k+1})\ .
$$}
\begin{eqnarray}
\sum_{n,k=0}^{\infty}b_{n,k}(t_1,\ldots,t_{n-k+1})\frac{v^nu^k}{n!}
=\exp\biggl(u\sum_{l=0}^{\infty}t_l\frac{v^l}{l!}\biggr).
\end{eqnarray}
We  specify the variables $\{t_l\}_{l\ge0}$
\begin{eqnarray}
t_1=\gamma,&\mbox{for $l>1$,}&t_l=(-1)^{l-1}(l-1)!\zeta(l),
\end{eqnarray}
then  let $B(x_1)$ be the following power series
\begin{eqnarray}
B(x_1)=1+\Sum_{n\ge1}\Sum_{k=1}^nb_{n,k}(\gamma,-\zeta(2),2\zeta(3),\ldots)
\frac{(-x_1)^n}{n!}
\end{eqnarray}
and
\begin{eqnarray}
B'(x_1)=e^{-\gamma\;x_1}B(x_1).
\end{eqnarray}

We get

\begin{proposition}
Let $\Psi_{KZ}$ be an element of the quasi-shuffle algebra,
$(\C\langle\!\langle Y\rangle\!\rangle,\stuffle)$, such that
$$\Pi_X\Psi_{KZ}=B(x_1)\PKZ,$$
where $\Pi_X\Psi_{KZ}$ is the projection of $\Psi_{KZ}$ over $X$.
Then $\Psi_{KZ}$ is group-like, {\it i.e.}
\begin{eqnarray}
\Delta_{\stuffle}(\Psi_{KZ})=\Psi_{KZ}\otimes\Psi_{KZ},
\end{eqnarray}
and satisfies
\begin{itemize}
\item $\langle\Psi_{KZ}\bv1_{Y^*}\rangle=1$,
\item $\langle\Psi_{KZ}\bv y_1\rangle=\gamma$,
\item for any $r_1>1$,
$\langle\Psi_{KZ}\bv y_{r_1}\ldots y_{r_k}\rangle=\zeta(r_1,\ldots,r_k),$
\item for any $u,v\in Y^*,
\langle\Psi_{KZ}|u\stuffle v\rangle
=\langle\Psi_{KZ}| u\rangle\langle\Psi_{KZ}|v\rangle.$
\end{itemize}
\end{proposition}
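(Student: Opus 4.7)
My plan is to exhibit $\Psi_{KZ}$ as a concatenation product of two known group-like (for $\Delta_\stuffle$) series, verify it satisfies the defining relation with $B(x_1)\PKZ$, and then read off the group-likeness and coefficient values.

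\textbf{Step 1: explicit candidate and reduction.} I would set $\widetilde{\Psi}:=\exp(\gamma y_1)\cdot Z_{\ministuffle}\in\C\langle\!\langle Y\rangle\!\rangle$, where $\exp$ is the ordinary (concatenation) exponential and $Z_{\ministuffle}=\H_\reg(\infty)$. Let $\Pi_X\colon\C\langle\!\langle Y\rangle\!\rangle\to\C\langle\!\langle X\rangle\!\rangle$ be the concatenation morphism determined by $y_n\mapsto x_0^{n-1}x_1$; it is injective on $Y$-words, so once the relation is established $\Psi_{KZ}$ is uniquely determined. Since $\Pi_X\exp(\gamma y_1)=\exp(\gamma x_1)$, the claimed $\Pi_X\widetilde{\Psi}=B(x_1)\PKZ$ reduces, via $B(x_1)=e^{\gamma x_1}B'(x_1)$, to the identity $\Pi_X Z_{\ministuffle}=B'(x_1)\PKZ$. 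This is the standard asymptotic comparison between the stuffle and shuffle regularisations: matching the expansion of $\H(N)$ as $N\to\infty$ (via $\mathrm{P}_{\mathbf n}(z)=\Li_{\mathbf n}(z)/(1-z)$) against the expansion of $\mathrm{L}$ at $z=1$ from Proposition \ref{chenregularization}, one identifies $B'(x_1)=\exp(-\sum_{\ell\ge 2}\zeta(\ell)x_1^\ell/\ell)$ --- formally $(e^{\gamma x_1}\Gamma(1-x_1))^{-1}$ --- as the universal correction. Injectivity of $\Pi_X$ then yields $\Psi_{KZ}=\widetilde{\Psi}$.

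\textbf{Step 2: group-like structure and character property.} The coproduct $\Delta_\stuffle$ is cocommutative (since $\stuffle$ is commutative), and $y_1$ is $\Delta_\stuffle$-primitive because no stuffle product of nontrivial factors can produce a word of weight $1$. Hence $\exp(\gamma y_1)$ is group-like in the Hopf algebra $(\C\langle\!\langle Y\rangle\!\rangle,\cdot,\Delta_\stuffle)$. By Theorem \ref{grouplike}, $\H(N)$ is group-like for $\Delta_\stuffle$; passage to the regularised limit gives $Z_{\ministuffle}$ group-like. Concatenation products of group-like elements are group-like in any Hopf algebra, so $\Psi_{KZ}=\exp(\gamma y_1)\cdot Z_{\ministuffle}$ is group-like, which dualises to $\langle\Psi_{KZ}|u\stuffle v\rangle=\langle\Psi_{KZ}|u\rangle\langle\Psi_{KZ}|v\rangle$. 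The coefficient values then drop out by direct Cauchy product: the constant is $1\cdot 1$; for $y_1$ only $(y_1,1_{Y^*})$ and $(1_{Y^*},y_1)$ contribute, giving $\gamma+\zeta_{\ministuffle}(y_1)=\gamma$ since $\zeta_{\ministuffle}(y_1)=0$; for a convergent word $y_{r_1}\cdots y_{r_k}$ with $r_1>1$, the only splitting with left factor in $y_1^*$ is $(1_{Y^*},y_{r_1}\cdots y_{r_k})$, contributing $\langle Z_{\ministuffle}|y_{r_1}\cdots y_{r_k}\rangle=\zeta(r_1,\ldots,r_k)$.

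\textbf{Main obstacle.} The non-trivial ingredient is the identity $\Pi_X Z_{\ministuffle}=B'(x_1)\PKZ$ of Step 1. This is where all the asymptotic content lives: one must verify that the precise interleaving of $\gamma$ and the depth-one polyzetas $\zeta(\ell)$ furnished by the Bell polynomial series $B'$ matches exactly the correction between the shuffle and stuffle regularisations of the KZ associator. Once this formal identity is secured, group-likeness and the coefficient values follow by routine Hopf-algebraic bookkeeping from the structural results already in place.
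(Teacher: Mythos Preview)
The paper states this proposition without proof; the factorisation $\Psi_{KZ}=e^{\gamma y_1}Z_{\ministuffle}$ is then recorded as the \emph{next} proposition, and the analytic content is deferred to the proof of the more general Theorem~\ref{associatorgp}. Your strategy therefore cannot be compared line by line with a proof in the text, but it is entirely compatible with the paper's logic and in fact runs it in reverse: you take the factorised candidate first, verify the defining relation $\Pi_X\Psi_{KZ}=B(x_1)\PKZ$, and then read off the four bullet points, whereas the paper presents the proposition as a characterisation and derives the factorisation afterwards.

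Your argument is correct. Step~2 is clean: $y_1$ is $\Delta_{\stuffle}$-primitive, $Z_{\ministuffle}$ is group-like by passage to the limit in Theorem~\ref{grouplike}, and in the bialgebra $(\C\langle\!\langle Y\rangle\!\rangle,\cdot,\Delta_{\stuffle})$ (the graded dual of Hoffman's quasi-shuffle Hopf algebra) concatenation of group-like elements is group-like, so the character identity and the coefficient computations follow exactly as you say. The only substantive point, which you rightly flag, is the comparison $\Pi_X Z_{\ministuffle}=B'(x_1)\PKZ$. In the paper this is precisely what emerges from the asymptotics in the proof of Theorem~\ref{associatorgp}: the series $\mathrm{Const}(N)=\exp\bigl(-\sum_{k\ge1}\H_{y_k}(N)(-y_1)^k/k\bigr)$ has constant part $B(y_1)$ (via $\H_{y_1}(N)=\log N+\gamma+o(1)$ and $\H_{y_k}(N)\to\zeta(k)$ for $k\ge2$), and matching against $\mathrm L(z)\sim e^{-x_1\log(1-z)}Z_{\minishuffle}$ yields exactly your identity. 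So your sketch of the ``main obstacle'' is the same mechanism the paper invokes, only phrased on the $Y$-side rather than via $\mathrm{Mono}(z)$ and $\mathrm{Const}(N)$; nothing is missing beyond writing out that asymptotic match explicitly.
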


Therefore we obtain,
\begin{proposition}
The noncommutative generating series $\Psi_{KZ}$ can be
factorized by Lyndon words as follows
\begin{eqnarray*}
\Psi_{KZ}=e^{\gamma\;y_1}Z_{\ministuffle}.
\end{eqnarray*}
\end{proposition}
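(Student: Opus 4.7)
My plan is to recognise $e^{\gamma\,y_1}\,Z_{\ministuffle}$ as a group-like series of the stuffle Hopf algebra which matches $\Psi_{KZ}$ on each of the four properties listed in the preceding proposition, and then to invoke the uniqueness of a stuffle character constrained by these data.

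\textbf{Group-likeness.} Since there is no pair $i,j\ge1$ with $i+j=1$, one has $\Delta_{\stuffle}(y_1)=y_1\otimes 1+1\otimes y_1$, so $y_1$ is primitive and $e^{\gamma\,y_1}$ is group-like for $\Delta_{\stuffle}$. By Theorem~\ref{grouplike}, $\H(N)$ is $\Delta_{\stuffle}$-group-like with factorisation $\H(N)=e^{\H_1(N)\,y_1}\,\H_{\reg}(N)$, hence $\H_{\reg}(N)=e^{-\H_1(N)\,y_1}\,\H(N)$ is group-like too, as is its coefficientwise limit $Z_{\ministuffle}=\H_{\reg}(\infty)$, identified by Theorem~\ref{reg1} with the generating series of the character $\zeta_{\ministuffle}$. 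The concatenation of group-like elements being group-like, $e^{\gamma\,y_1}\,Z_{\ministuffle}$ is group-like.

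\textbf{Matching of coefficients.} The constant coefficient is $1\cdot1=1$; the coefficient of $y_1$ is $\gamma\cdot1+1\cdot\zeta_{\ministuffle}(y_1)=\gamma$ by Theorem~\ref{reg1}. For a convergent word $w=y_{r_1}\cdots y_{r_k}$ with $r_1>1$, the series $e^{\gamma\,y_1}$ is supported on powers of $y_1$ and $w$ does not begin with $y_1$, so the concatenation pairing collapses to
$$
\langle e^{\gamma\,y_1}\,Z_{\ministuffle}\bv w\rangle=\langle Z_{\ministuffle}\bv w\rangle=\zeta_{\ministuffle}(w)=\zeta(r_1,\ldots,r_k)
$$
by Theorem~\ref{reg1} and Corollary~\ref{zetareg1}. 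Thus $e^{\gamma\,y_1}\,Z_{\ministuffle}$ satisfies the four defining properties of $\Psi_{KZ}$.

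\textbf{Uniqueness and main obstacle.} It remains to argue that these four properties pin the series down. Since both candidates are $\stuffle$-characters, it suffices to show that a character $\chi$ on $(\CYY,\stuffle)$ is determined by $\chi(y_1)$ together with its values on convergent words. Every word decomposes as $y_1^n u$ with $u\in\{1_{Y^*}\}\cup(Y\setminus\{y_1\})Y^*$, and I proceed by induction on~$n$. The expected obstacle lies precisely in the combinatorial identity $y_1\,\stuffle\,y_1^{n-1}u=n\,y_1^n u+R$, where $R$ collects words with strictly fewer than $n$ leading $y_1$'s, produced either by a $y_1+y_j\mapsto y_{1+j}$ merge consuming a leading letter, or by pushing the extra $y_1$ past the first non-$y_1$ letter of $u$. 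Granted this identity (which follows from a short induction on $n$ via the recursion $y_a u\,\stuffle\,y_b v=y_a(u\,\stuffle\,y_b v)+y_b(y_a u\,\stuffle\,v)+y_{a+b}(u\,\stuffle\,v)$ while tracking the length of the leading $y_1$-block), applying $\chi$ and invoking the induction hypothesis determines $\chi(y_1^n u)$, so $\Psi_{KZ}=e^{\gamma\,y_1}\,Z_{\ministuffle}$.
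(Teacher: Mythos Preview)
Your argument is correct and is precisely the expansion of the paper's one-word proof (``Therefore we obtain,'' immediately after the proposition listing the four properties of $\Psi_{KZ}$): you verify that $e^{\gamma\,y_1}Z_{\ministuffle}$ is a $\stuffle$-group-like series satisfying those four conditions, and then supply the uniqueness of a stuffle character with prescribed value on $y_1$ and on convergent words, which the paper leaves entirely implicit. One cosmetic point: the equality $\zeta_{\ministuffle}(w)=\zeta(r_1,\ldots,r_k)$ for convergent $w$ is already the second item of Theorem~\ref{reg1}, not Corollary~\ref{zetareg1}.
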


More generally, we have

\begin{theorem}\label{associatorgp}
Let $A$ be a commutative $\Q$-algebra.
For any $\Phi\in\AXX$ and $\Psi\in A\langle\!\langle Y\rangle\!\rangle$
such that $\Pi_X\Psi=B(x_1)\Phi$,
there exists a unique $C\in\LAXX$
with coefficients in $A$ such that
\begin{eqnarray*}
\Phi=\Phi_{KZ}e^C&\mbox{and}&\Psi=B(y_1)\Pi_Y(\Phi_{KZ}e^C),
\end{eqnarray*}
where $\Pi_Y(\Phi_{KZ}e^C)$ is the projection of $\Phi_{KZ}e^C$ over $Y$.
\end{theorem}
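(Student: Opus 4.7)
The plan is to split the statement into two steps. First, from $\Phi = \Phi_{KZ}e^C$ alone, produce the unique Lie series $C$; second, use the hypothesis $\Pi_X\Psi = B(x_1)\Phi$ to derive the formula $\Psi = B(y_1)\Pi_Y(\Phi_{KZ}e^C)$. The two halves are essentially independent once $C$ is in hand.

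For the first step, I would argue that $\Phi_{KZ}^{-1}\Phi$ is group-like for the shuffle coproduct: $\Phi_{KZ}$ is group-like (shown in the preceding text, via the group-likeness of $\mathrm{L}$ together with Theorem \ref{grouplike} and the equality $\Phi_{KZ}=Z_{\minishuffle}$), and $\Phi$, being an associator, is group-like by hypothesis; the group-like elements of the complete Hopf algebra $A\langle\!\langle X\rangle\!\rangle$ form a group. Over a commutative $\Q$-algebra $A$, $\log$ sends group-like elements bijectively onto primitive elements, and by the Friedrichs/Ree criterion the primitives are exactly the Lie series. Therefore
$$C := \log(\Phi_{KZ}^{-1}\Phi)$$
lies in $\Lie_A\langle\!\langle X\rangle\!\rangle$, and it is unique because $\exp$ is a bijection between the Lie series of zero constant term and the group-like elements with constant term $1$.

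For the second step, recall that $\Pi_X : A\langle\!\langle Y\rangle\!\rangle\to A\langle\!\langle X\rangle\!\rangle$ is the linear extension of $y_{r_1}\cdots y_{r_k}\mapsto x_0^{r_1-1}x_1\cdots x_0^{r_k-1}x_1$, which is injective with left inverse $\Pi_Y$ (decode on $X^*x_1\cup\{1_{X^*}\}$, annihilate elsewhere). Applying $\Pi_Y$ to the assumption $\Pi_X\Psi = B(x_1)\Phi$ gives $\Psi = \Pi_Y(B(x_1)\Phi) = \Pi_Y\bigl(B(x_1)\Phi_{KZ}e^C\bigr)$. The key intertwining identity is
$$\Pi_Y\bigl(B(x_1)\,\xi\bigr) = B(y_1)\,\Pi_Y(\xi), \qquad \xi\in A\langle\!\langle X\rangle\!\rangle,$$
which I would prove word by word: for $w\in X^*x_1\cup\{1_{X^*}\}$, left multiplication by $x_1^n$ keeps $w$ inside $X^*x_1\cup\{1_{X^*}\}$ and $\Pi_Y(x_1^n w)=y_1^n\Pi_Y(w)$, so the identity holds termwise on $B(x_1)=\sum c_n x_1^n$; for $w$ ending in $x_0$, both sides vanish. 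Substituting yields $\Psi = B(y_1)\Pi_Y(\Phi_{KZ}e^C)$, as required.

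The main obstacle is the first step: one has to upgrade the classical "log of group-like is Lie" correspondence from $\C$ to an arbitrary commutative $\Q$-algebra $A$. The standard route (via the Hopf-algebraic formalism in Reutenauer, \emph{Free Lie Algebras}, Ch.~3) carries over verbatim once characteristic zero is available, which is built into the $\Q$-algebra hypothesis. A secondary technicality is that $B(x_1)$ has infinitely many components, so one must ensure that $B(x_1)\xi$ is a summable family in the completion; this is automatic because any fixed homogeneous component of the product involves only finitely many terms of $B(x_1)$.
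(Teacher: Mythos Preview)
Your route is genuinely different from the paper's. The paper argues \emph{analytically} and in the reverse direction: it starts from a Lie series $C$, forms the perturbed solution $\mathrm{L}'=\mathrm{L}e^C$ of (\ref{drinfeld}) via Proposition~\ref{sol}, reads off $\Phi:=Z_{\minishuffle}e^C$ from the asymptotic behaviour (\ref{asymptoticbehaviour}) of $\mathrm{L}'$ near $z=1$, and then passes to the associated harmonic-sum series $\H'$, identifying $\Psi$ as the generating series of the constant parts of the $\H'_w(N)$, which it shows equals $B(y_1)\Pi_Y\Phi$ through the explicit $\mathrm{Const}$ series. Your argument is purely Hopf-algebraic: the $\exp/\log$ bijection between group-like and primitive elements produces $C$ directly, and the combinatorial intertwining $\Pi_Y(B(x_1)\xi)=B(y_1)\Pi_Y(\xi)$ transports the hypothesis $\Pi_X\Psi=B(x_1)\Phi$ to the conclusion. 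The paper's approach buys an analytic interpretation of $(\Phi,\Psi)$ as regularized limits, which feeds the differential-Galois discussion that follows the theorem; your approach buys a clean uniqueness argument and avoids the asymptotic machinery entirely.

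One point to flag: you assert that $\Phi$, ``being an associator, is group-like by hypothesis,'' but the stated hypothesis is only $\Phi\in\AXX$. Without group-likeness your first step collapses (and indeed $\Phi=\Phi_{KZ}e^C$ with $C$ Lie forces $\Phi$ to be group-like, so the assumption is necessary for the conclusion). The paper sidesteps this by running the construction from $C$ to $(\Phi,\Psi)$ and never actually addressing existence or uniqueness of $C$ for a \emph{given} $\Phi$; so strictly speaking its proof is a construction rather than a proof of the statement as written, and your added hypothesis is the natural repair. Your intertwining identity and the left-inverse relation $\Pi_Y\Pi_X=\mathrm{id}$ are correct and make the second step watertight.
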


\begin{proof}
Let $C\in\LAXX$. Then, by Proposition \ref{sol}, $\mathrm{L}'=\mathrm{L}e^C$ is solution of (\ref{drinfeld}).
Let $\H'$ be the noncommutative generating series of the Taylor coefficients,
belonging to the harmonic algebra, of $\{(1-z)^{-1}\langle\mathrm{L}'\bv w\rangle\}_{w\in Y^*}$.
Then $\H'(N)$ is also group-like.

By the asymptotic expansion on (\ref{asymptoticbehaviour}), we get
\begin{eqnarray*}
{\mathrm{L}'(z)}&{}_{\widetilde{\eps\to1}}&e^{-x_1\log(1-z)}Z_{\minishuffle}e^C.
\end{eqnarray*}
We then put $\Phi:=Z_{\minishuffle}e^C$ and deduce that
\begin{eqnarray*}
\frac{\mathrm{L}'(z)}{1-z}\;{}_{\widetilde{z\to1}}\;\mathrm{Mono}(z)\Phi
&\mbox{and}&
\H'(N)\;{}_{\widetilde{N\to\infty}}\;\mathrm{Const}(N)\Pi_Y\Phi,
\end{eqnarray*}
where \cite{cade}
\begin{eqnarray*}
\mathrm{Mono}(z)=e^{-(x_1+1)\log(1-z)}=\Sum_{k\ge0}\P_{y_1^k}(z)\;y_1^k\\ \mathrm{Const}=\sum_{k\ge0}\H_{y_1^k}\;y_1^k=\exp\biggl[-\Sum_{k\ge1}\H_{y_k}\Frac{(-y_1)^k}{k}\biggr].
\end{eqnarray*}
Let $\kappa_w$ be the constant part of $\H'_w(N)$. Then,
\begin{eqnarray*}
\sum_{w\in Y^*}\kappa_w\;w=B(y_1)\Pi_Y\Phi.
\end{eqnarray*}
We now put $\Psi:=B(y_1)\Pi_Y(Z_{\minishuffle}e^C)$
(and also $\Psi':=B'(y_1)\Pi_Y(Z_{\minishuffle}e^C)$.
\end{proof}

Thus, any associator $\Phi$ and its image $\Psi$,
can be determined from the Drinfel'd associator $\PKZ$
and its image $\Psi_{KZ}$ respectively, by the action of the group
of constant Lie exponential series over $X$. This group is nothing
other than  the differential Galois group of the differential equation
(\ref{drinfeld}) and contains  in particular \cite{orlando}
the monodromy group of (\ref{drinfeld}) given by \cite{FPSAC98}
\begin{eqnarray}\label{m_0,m_1}
\calM_0=e^{2i\pi x_0}&\mbox{and}&
\calM_1=\Phi_{KZ}^{-1}e^{-2i\pi x_1}\Phi_{KZ}.
\end{eqnarray}

By Proposition \ref{chenregularization}, we already saw that $Z_{\minishuffle}$
is the concatenation of the Chen generating series \cite{chen}
$e^{x_0\log\varepsilon}$ and then $S_{\eps\path1-\eps}$ and finally, $e^{x_1\log\varepsilon}$~:
\begin{eqnarray}
Z_{\minishuffle}\;{}_{\widetilde{\eps\rightarrow0^+}}\;
e^{x_1\log\eps}S_{\eps\path1-\eps}\;e^{x_0\log\eps}.
\end{eqnarray}
From (\ref{m_0,m_1}), the action of the monodromy group gives
\begin{eqnarray}
e^{x_1\;2k_1\mathrm{i}\pi}Z_{\minishuffle}e^{x_0\;2k_0\mathrm{i}\pi}
\;{}_{\widetilde{\eps\rightarrow0^+}}\;
e^{x_1(\log\eps+2k_1\mathrm{i}\pi)}
S_{\eps\path1-\eps}e^{x_0(\log\eps+2k_0\mathrm{i}\pi)}
\end{eqnarray}
as being the concatenation of the Chen generating series \cite{chen}
$e^{x_0(\log\varepsilon+2k_0\mathrm{i}\pi)}$
(along a circular path turning $k_0$ times around $0$),
then the Chen generating series $S_{\eps\path1-\eps}$ and finally,
the Chen generating series $e^{x_1(\log\varepsilon+2k_1\mathrm{i}\pi)}$
(along a circular path turning $k_1$ times around $1$).
More generally, the action of the Galois differential group
of polylogarithms shows  that, for any Lie series $C$,
the associator $\Phi=Z_{\minishuffle}e^C$
is the concatenation of some Chen generating series $e^C$
and $e^{x_0\log\varepsilon}$; then the Chen generating series $S_{\eps\path1-\eps}$
and finally, $e^{x_1\log\varepsilon}$~:
\begin{eqnarray}\label{concatenationofchenseries}
\Phi\;{}_{\widetilde{\eps\rightarrow0^+}}\;
e^{x_1\log\eps}S_{\eps\path1-\eps}e^{x_0\log\eps}\;e^C.
\end{eqnarray}
As in the proof of Theorem \ref{associatorgp}, by construction the associator $\Phi$
is then the noncommutative generating series of the finite parts of the coefficients
of the Chen generating series $S_{z_0\path1-z_0}e^C$, for $z_0=\eps\rightarrow0^+$.

\begin{lemma}
Let $\Phi\in dm(A)=\{\PKZ e^C\bv C\in\LAXX\mbox{ and }
\langle e^C\bv\epsilon\rangle=1,\langle e^C\bv x_0\rangle=\langle e^C\bv x_1\rangle=0\}$.
Let $\Pi_Y\Phi$ be the projection of $\Phi$ over $Y$. One has
\begin{eqnarray*}
\Psi=B(y_1)\Pi_Y\Phi&\iff&\Psi'=B'(y_1)\Pi_Y\Phi.
\end{eqnarray*}
In particular,
\begin{eqnarray*}
\Psi_{KZ}=B(y_1)\Pi_Y\Phi_{KZ}&\iff&
\Psi'_{KZ}=B'(y_1)\Pi_Y\Phi_{KZ}.
\end{eqnarray*}
\end{lemma}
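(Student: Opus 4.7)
The plan is to observe that the two equations in the equivalence differ only by a multiplicative factor of $e^{\pm\gamma y_1}$, so the lemma should reduce to an unwinding of the definitions of $B$ and $B'$ together with the convention linking $\Psi$ and $\Psi'$.

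First I would recall the defining relation $B'(x_1)=e^{-\gamma x_1}B(x_1)$ stated just above the lemma, and the construction in the proof of Theorem \ref{associatorgp}, where the two companion series $\Psi$ and $\Psi'$ attached to a given $\Phi=\PKZ e^C\in dm(A)$ are obtained from the \emph{same} projection $\Pi_Y\Phi$ by left multiplication by $B(y_1)$ and $B'(y_1)$ respectively. Equivalently, $\Psi$ and $\Psi'$ always satisfy the pairing relation $\Psi'=e^{-\gamma y_1}\Psi$, independently of the choice of $C\in\LAXX$.

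Granted this, the forward implication is immediate: if $\Psi=B(y_1)\Pi_Y\Phi$, then multiplication on the left by $e^{-\gamma y_1}$ gives
$$\Psi'=e^{-\gamma y_1}\Psi=e^{-\gamma y_1}B(y_1)\Pi_Y\Phi=B'(y_1)\Pi_Y\Phi,$$
since the three power series $e^{-\gamma y_1}$, $B(y_1)$ and $B'(y_1)$ all lie in the commutative subalgebra $A[[y_1]]$ and therefore commute freely with one another. The converse is obtained symmetrically by multiplying through by $e^{\gamma y_1}$ and using $B(y_1)=e^{\gamma y_1}B'(y_1)$. The ``in particular'' clause then follows by specialising to $C=0$, i.e.\ to $\Phi=\PKZ$, which belongs to $dm(A)$ by definition of the set.

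There is no deep obstacle: the content of the lemma is essentially notational. The only subtlety worth flagging is the implicit convention, inherited from Theorem \ref{associatorgp}, that $\Psi$ and $\Psi'$ are the \emph{companion} series $B(y_1)\Pi_Y(Z_{\ministuffle}e^C)$ and $B'(y_1)\Pi_Y(Z_{\ministuffle}e^C)$ of the same $\Phi$, so that $\Psi'=e^{-\gamma y_1}\Psi$ holds a priori. Once this pairing is made explicit, the equivalence collapses to a single line of algebra in $A[[y_1]]$.
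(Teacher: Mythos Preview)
Your argument is correct, and in fact the paper states this lemma \emph{without proof}: immediately after the lemma the text moves on to the next section. So there is no ``paper proof'' against which to compare your approach.

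Your reasoning matches exactly what the paper implicitly relies on. In the proof of Theorem~\ref{associatorgp} the paper \emph{defines} the companion series by
\[
\Psi:=B(y_1)\Pi_Y(Z_{\minishuffle}e^C)
\qquad\text{and}\qquad
\Psi':=B'(y_1)\Pi_Y(Z_{\minishuffle}e^C),
\]
so that $\Psi'=e^{-\gamma y_1}\Psi$ holds by construction; combined with the defining relation $B'(y_1)=e^{-\gamma y_1}B(y_1)$, the equivalence is indeed a one-line multiplication in the commutative subalgebra generated by $y_1$. Your identification of this as the sole content of the lemma, and your handling of the ``in particular'' clause by specialising to $C=0$, are both appropriate.
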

{\section{The route to properties of Euler's $\gamma$ constant}
We have already mentioned that as a remarkable by-product of our extension of the algebraic considerations related to simple quantum mechanics, we are led to algebras involving polyzeta functions.  This  excursion is  not surprising, as it is already well known that evaluations in quantum field theory involve these functions \cite{kre}. In this context, an important related number    is the Euler constant  $\gamma$. We are thus led to  to consider properties of $\gamma$.  In fact, in this note we  arrive at a proposition concerning the rationality of $\gamma$.  The basis of this result depends on the following conjectures\footnote{For a discussion and proof of these conjectures see \cite{conj}.}
\begin{conjecture}\label{lem2}
Let $\Phi\in dm(A)$ and $\Psi'=B'(y_1)\Pi_Y\Phi$.
The local coordinates (of the second kind) of $\Phi$ (resp. $\Psi'$), in the Lyndon-PBW basis,
are polynomials of generators $\{\zeta(l)\}_{l\in\Lyn X}-\{x_0,x_1\}$
(resp. $\{\zeta(l)\}_{l\in\Lyn Y}-\{y_1\}$) of $\calZ$.
While $C$ describes $\LAXX$, these local coordinates
describe $A[\{\zeta(l)\}_{l\in\Lyn X}-\{x_0,x_1\}]$
(resp. $A[\{\zeta(l)\}_{l\in\Lyn Y}-\{y_1\}]$).
\end{conjecture}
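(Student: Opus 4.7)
The plan is to reduce the statement to the known structure of the Drinfel'd associator $\PKZ$ via Theorem~\ref{associatorgp}, and then transport it through a Lyndon PBW re-factorization. First I would invoke Theorem~\ref{associatorgp}: any $\Phi \in dm(A)$ admits a unique representation $\Phi = \PKZ\, e^C$ with $C \in \LAXX$ subject to $\langle e^C | x_0\rangle = \langle e^C | x_1\rangle = 0$. Then, by Theorem~\ref{grouplike} specialized at $z=1$ together with Corollary~\ref{zetareg2},
\begin{equation*}
\PKZ \;=\; Z_{\minishuffle} \;=\; \prod_{l \in \Lyn X,\, l \neq x_0, x_1}^{\searrow} e^{\zeta_{\minishuffle}(l)\,\hat{l}},
\end{equation*}
each coordinate $\zeta_{\minishuffle}(l)$ being a polynomial in the generators $\{\zeta(l')\}_{l' \in \Lyn X \setminus \{x_0,x_1\}}$ of $\calZ$, by the last bulleted structural result following the two isomorphism theorems on $\Li_w$ and $\P_w$.

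Next, since $e^C$ is a Lie exponential, it is group-like and hence admits a unique Lyndon PBW factorization $e^C = \prod^{\searrow} e^{c_l \hat{l}}$ with $c_l \in A$ and $c_{x_0}=c_{x_1}=0$. I would then compute the Lyndon factorization of the product $\Phi = \PKZ\, e^C$ by re-ordering exponentials of Lyndon monomials one Lyndon word at a time in increasing length order, applying Baker--Campbell--Hausdorff to reduce each pair $[\hat{l}_1,\hat{l}_2]$ to its Lyndon-bracket expansion. At every step the coefficients produced are $\Q$-polynomial expressions in the (finitely many) $\zeta_{\minishuffle}(l)$ and $c_l$ indexed by Lyndon words of strictly smaller length, so the local coordinate of $\Phi$ at $\hat{l}$ takes the form
\begin{equation*}
\phi_l \;=\; c_l + \zeta_{\minishuffle}(l) + P_l\!\bigl(\{c_{l'}\}_{|l'|<|l|},\,\{\zeta_{\minishuffle}(l'')\}_{|l''|<|l|}\bigr),
\end{equation*}
which lies in $A[\{\zeta(l')\}_{l' \in \Lyn X \setminus \{x_0,x_1\}}]$. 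The converse inclusion (surjectivity as $C$ varies) follows from the above triangular shape: by induction on Lyndon length, every polynomial in $\{\zeta(l)\}_{l \in \Lyn X \setminus \{x_0,x_1\}}$ is attained by a suitable choice of $c_l$'s, hence of $C \in \LAXX$.

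For $\Psi'$, I would run the parallel argument on the stuffle side using $\Psi' = e^{-\gamma y_1}\, B(y_1)\, \Pi_Y \Phi$. The projection $\Pi_Y$ sends $\Phi$, which is group-like for $\Delta_{\shuffle}$, to an element that is stuffle-group-like when precomposed with $B(y_1)$ (this is the content of the factorization $\Psi_{KZ}=e^{\gamma y_1}Z_{\ministuffle}$ stated just above the theorem). The prefactor $e^{-\gamma y_1}$ is tailored precisely to kill the $\gamma$-contribution that $B(y_1)$ introduces on the Lyndon letter $y_1$. One then writes
\begin{equation*}
\Psi' \;=\; \prod_{l \in \Lyn Y,\, l \neq y_1}^{\nearrow} e^{\zeta_{\ministuffle}(l)\,\hat{l}} \cdot \Pi_Y(e^C),
\end{equation*}
and applies Corollary~\ref{zetareg1} and the same Lyndon re-ordering to conclude that the coordinates of $\Psi'$ lie in $A[\{\zeta(l)\}_{l \in \Lyn Y \setminus \{y_1\}}]$, with the same triangularity giving the reverse inclusion.

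The main obstacle is the combinatorial bookkeeping for the Lyndon re-factorization: one has to verify that the iterated Hausdorff coefficients remain in $\Q$ (so no hidden transcendentals sneak in) and that polynomiality, not mere formal-series expressibility, survives the rearrangement at every Lyndon level. A more delicate point underlies the surjectivity half of the statement: to identify the locus swept out with an honest polynomial ring one implicitly uses the algebraic independence over $A$ of $\{\zeta(l)\}_{l \in \Lyn X \setminus \{x_0,x_1\}}$, which is a deep transcendence-type property of polyzetas. It is precisely this independence issue that justifies the statement appearing here as a conjecture rather than a theorem.
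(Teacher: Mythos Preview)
The paper itself does not prove this statement: it is explicitly introduced as a \emph{conjecture} (together with the one that follows), with a footnote sending the reader to an external preprint for ``a discussion and proof''. So there is no in-paper argument to compare your proposal against.

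That said, your outline is a sensible route for the \emph{containment} direction. Writing $\Phi=\PKZ\,e^{C}$ via Theorem~\ref{associatorgp}, factoring each group-like factor in the Lyndon--PBW basis, and then re-ordering by Baker--Campbell--Hausdorff does produce, degree by degree, coordinates of the form $\phi_l=c_l+\zeta_{\minishuffle}(l)+P_l(\ldots)$ with $P_l\in\Q[\,\cdot\,]$; combined with Corollaries~\ref{zetareg1} and~\ref{zetareg2} this places each $\phi_l$ (resp.\ $\psi_l$) inside $A[\{\zeta(l)\}]$. The bookkeeping you flag (rationality of Hausdorff coefficients, finiteness at each Lyndon level) is routine and not where the difficulty lies.

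You have also correctly isolated the genuine obstruction. The second clause of the conjecture --- that as $C$ ranges over $\LAXX$ the local coordinates \emph{sweep out} the full polynomial ring $A[\{\zeta(l)\}_{l\in\Lyn X\setminus\{x_0,x_1\}}]$ --- is only meaningful if that polynomial ring really is a polynomial ring, i.e.\ if the generators $\{\zeta(l)\}$ are algebraically independent over $A$. This is a notoriously open transcendence problem for polyzetas (of Zagier type), and your triangular inversion argument for surjectivity tacitly assumes it. That is exactly why the statement is recorded as a conjecture in the paper rather than a theorem, and your closing remark shows you understand this. In short: your sketch is a reasonable conditional proof, contingent on an independence hypothesis the paper does not (and cannot, at present) supply.
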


\begin{conjecture}
For any $\Phi\in dm(A)$, by identifying the local coordinates (of the second kind) on two
members of the identities $\Psi=B(y_1)\Pi_Y\Phi$, or equivalently on
$\Psi'=B'(y_1)\Pi_Y\Phi$, we get polynomial relations with
coefficients in $A$ of the convergent polyzetas.
\end{conjecture}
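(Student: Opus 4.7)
The plan is to reduce Conjecture 2 to a structural consequence of Conjecture \ref{lem2} by tracking, under the identity $\Psi' = B'(y_1)\Pi_Y\Phi$, how the local coordinates of the two sides are simultaneously polynomial in two distinct (but both valid) generating families of $\calZ$, then equating them coordinate-by-coordinate.

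First, by Conjecture \ref{lem2} applied to $\Phi\in dm(A)$, the local coordinates $\{\xi_l(\Phi)\}_{l\in\Lyn X\setminus\{x_0,x_1\}}$ of $\Phi$ in its Lyndon--PBW factorization lie in $A[\{\zeta(l')\}_{l'\in\Lyn X\setminus\{x_0,x_1\}}]$ and, as $C\in\LAXX$ varies, they sweep out the whole of this polynomial algebra. Symmetrically, the local coordinates $\{\eta_l(\Psi')\}_{l\in\Lyn Y\setminus\{y_1\}}$ of $\Psi'$ lie in $A[\{\zeta(l')\}_{l'\in\Lyn Y\setminus\{y_1\}}]$. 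Both sides of $\Psi'=B'(y_1)\Pi_Y\Phi$ are therefore representable as ordered Lyndon-exponential products over $Y$ whose scalar parameters belong to two \emph{a priori} different presentations of $\calZ$.

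Next, I would re-expand $B'(y_1)\Pi_Y\Phi$ in the Lyndon--PBW basis over $Y$. Since $\Pi_Y$ is $A$-linear and explicit on words, and since the group-like character of $\Phi$ (hence of $B'(y_1)\Pi_Y\Phi$, as in the proof of Theorem \ref{associatorgp}) ensures that its $Y$-PBW coordinates can be recovered from those of $\Phi$ via iterated Campbell--Baker--Hausdorff manipulations, the first step converts these new coordinates into polynomials in the $X$-Lyndon polyzetas with coefficients in $A$. Multiplication by $B'(y_1)=e^{-\gamma y_1}B(y_1)$ preserves this property, the coefficients of $B'(y_1)$ being explicit polynomials in $\gamma$ and the $\zeta(k)$ for $k\ge 2$. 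Equating with $\eta_l(\Psi')$ over $l\in\Lyn Y\setminus\{y_1\}$ then yields, for each such $l$, an identity
\[
\eta_l(\Psi')\;=\;P_l\bigl(\{\zeta(l')\}_{l'\in\Lyn X\setminus\{x_0,x_1\}}\bigr),\qquad P_l\in A[\cdot],
\]
whose left-hand side is itself a polynomial in the $Y$-Lyndon polyzetas; transposing gives the desired polynomial relation with coefficients in $A$ among convergent polyzetas.

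The main obstacle is twofold. First, one must genuinely control the rearrangement of $B'(y_1)\Pi_Y\Phi$ into ordered Lyndon exponentials over $Y$: the change from concatenation/stuffle expressions to the second-kind PBW basis involves iterated BCH series whose scalars must remain polynomial in the polyzetas at every step, and the indispensable input here is that both regularizations $\zeta_\shuffle$ and $\zeta_\stuffle$ (Theorems \ref{reg1}, \ref{reg2}) land inside $\calZ$ (Corollaries \ref{zetareg1}, \ref{zetareg2}), closing the relevant scalar algebra under all operations used. Second, one must exploit the freedom in $\Phi$: as $C$ ranges freely over $\LAXX$, Conjecture \ref{lem2} guarantees that the $\xi_l(\Phi)$ sweep out the full polynomial algebra $A[\{\zeta(l')\}_{l'\in\Lyn X\setminus\{x_0,x_1\}}]$, which forces the identities produced to be \emph{bona fide} polynomial relations among the polyzetas rather than tautologies specific to a single choice of $C$. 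Once these two points are secured, coefficient-by-coefficient identification supplies the claimed family of relations.
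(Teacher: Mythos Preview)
The paper does not contain a proof of this statement: it is explicitly labelled a \emph{Conjecture}, and the surrounding text says ``if the two preceding conjectures prove to be true, we get the route to a striking result\ldots'', with a footnote pointing to an external preprint for ``a discussion and proof of these conjectures''. So there is no in-paper argument to compare your proposal against; at best one can judge your sketch on its own terms.

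On those terms, your outline is a reasonable reduction of Conjecture~2 to Conjecture~\ref{lem2}: once the PBW coordinates of $\Phi$ (resp.\ $\Psi'$) are known to be polynomials in the $X$-Lyndon (resp.\ $Y$-Lyndon) polyzeta generators, identifying the two sides of $\Psi'=B'(y_1)\Pi_Y\Phi$ coordinate-by-coordinate does produce equalities between elements of $A[\{\zeta(l)\}_{l\in\Lyn Y\setminus\{y_1\}}]$ and $A[\{\zeta(l)\}_{l\in\Lyn X\setminus\{x_0,x_1\}}]$, hence polynomial relations among convergent polyzetas with coefficients in $A$. Two remarks, however. First, your statement that ``the coefficients of $B'(y_1)$ [are] explicit polynomials in $\gamma$ and the $\zeta(k)$'' is imprecise: the whole point of passing from $B$ to $B'=e^{-\gamma y_1}B$ is that the $\gamma$-dependence cancels, leaving only $\zeta(k)$ for $k\ge2$; this is exactly what makes the resulting relations ``algebraically independent of $\gamma$'' in Statement~\ref{alginpdt}. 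Second, the step you flag as an obstacle --- controlling the re-expansion of $B'(y_1)\Pi_Y\Phi$ into ordered Lyndon exponentials via iterated BCH --- is the entire content of the claim, and you have not actually carried it out; invoking Corollaries~\ref{zetareg1} and~\ref{zetareg2} is relevant but not sufficient, since one must check that the rational BCH coefficients combine with the polyzeta inputs to stay inside the asserted polynomial algebra at every truncation. Your proposal is therefore a coherent plan rather than a proof, and it presupposes Conjecture~\ref{lem2} throughout.
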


Therefore, if the two preceding conjectures prove to be true, we get the route to a striking result through the complete description of the ideal of relations between polyzetas

\begin{state}\label{alginpdt}
While $\Phi$ describes $dm(A)$, the identities
$\Psi=B(y_1)\Pi_Y\Phi$ describe the ideal of polynomial
relations of coefficients in $A$ of the  convergent polyzetas.

Moreover, if the Euler constant $\gamma$ does not belong to $A$
then these relations are algebraically independent of $\gamma$.
\end{state}

\begin{state}\label{notalgebraic}
If $\gamma\notin A$ then $\gamma\notin\bar A$.
\end{state}

With $A=\Q$, it follows immediately that

\begin{state}\label{petitpoisson}
$\gamma$ is not an algebraic irrational number.
\end{state}

\begin{state}\label{grospoisson}
$\gamma$ is a rational number.
\end{state}

\begin{proof}
Since $\gamma$ satisfies $t^2-\gamma^2=0$ then $\gamma$
is algebraic over $A=\Q(\gamma^2)$.
If $\gamma$ is transcendental then $\gamma\notin A=\Q(\gamma^2)$.
Using Corollary \ref{notalgebraic}, with $A=\Q(\gamma^2)$,
$\gamma$ is not algebraic over $A=\Q(\gamma^2)$.
This contradicts the previous assertion.
Thus, by Corollary \ref{petitpoisson}, it is established that
$\gamma$ is rational over $\Q$.
\end{proof}

\section{Structure of polyzetas}
Let $\Phi\in dm(A)$ and let $\Psi=B'(y_1)\Pi_Y\Phi$. We introduce two algebra morphisms
\begin{eqnarray}
\begin{array}{ccc}
\phi:(A\langle X\rangle,\shuffle)&\longrightarrow&A,\\
u&\longmapsto&\langle\Phi\bv u\rangle,
\end{array}
\end{eqnarray}
and
\begin{eqnarray}
\begin{array}{ccc}
\psi:(A\langle Y\rangle,\stuffle)&\longrightarrow&A,\\
v&\longmapsto&\langle\Psi\bv v\rangle,
\end{array}
\end{eqnarray}
satisfying respectively
\begin{eqnarray}
\phi(\epsilon)=1&\mbox{and}&\phi(x_0)=\phi(x_1)=0,\\
\psi(\epsilon)=1&\mbox{and}&\psi(y_1)=0.
\end{eqnarray}
Hence,
\begin{eqnarray}\label{phipsi}
\Phi=\sum_{u\in X^*}\phi(u)\;u
=\prod_{l\in\Lyn X\atop l\neq x_0,x_1}^{\searrow}e^{\phi(l)\;\hat l},&&\\
\Psi=\sum_{v\in Y^*}\psi(u)\;u
=\prod_{l\in\Lyn Y\atop l\neq y_1}^{\nearrow}e^{\psi(l)\;\hat l}.&&
\end{eqnarray}

Using these factorizations of the monoids by Lyndon words, we get

\begin{state}
For any $\Phi\in dm(A)$, let $\Psi=B'(y_1)\Pi_Y\Phi$. Then
\begin{eqnarray*}
\Prod_{l\in\Lyn X\atop l\neq x_0,x_1}^{\searrow}e^{\phi(l)\;\hat l}
=e^{-\Sum_{k\ge2}\zeta(k)\Frac{(-x_1)^k}{k}}
\Pi_X\Prod_{l\in\Lyn Y\atop l\neq y_1}^{\nearrow}e^{\psi(l)\;\hat l}.
\end{eqnarray*}
In particular, if $\Phi=Z_{\minishuffle}$ and $\Psi=Z_{\ministuffle}$ then
\begin{eqnarray*}
\Prod_{l\in\Lyn X\atop l\neq x_0,x_1}^{\searrow}e^{\zeta(l)\;\hat l}
=e^{-\Sum_{k\ge2}\zeta(k)\Frac{(-x_1)^k}{k}}
\Pi_X\Prod_{l\in\Lyn Y\atop l\neq y_1}^{\nearrow}e^{\zeta(l)\;\hat l}.
\end{eqnarray*}
\end{state}

Since
\begin{eqnarray}
\forall l\in\Lyn Y\iff\Pi_Xl\in\Lyn X\setminus\{x_0\}
\end{eqnarray}
then identifying the local coordinates,
we get polynomial relations among the generators
which are algebraically independent of $\gamma$.

\begin{state}\label{ideal}
For  $\ell\in\Lyn Y-\{y_1\}$ (resp. $\Lyn X-\{x_0,x_1\}$),
let $P_\ell\in\LQX$ (resp. $\LQY$) be the decomposition
of the polynomial $\Pi_X\hat\ell\in\QX$ (resp. $\Pi_Y\hat\ell\in\QY$)
in the Lyndon-PBW basis $\{\hat l\}_{l\in\Lyn X}$ (resp. $\{\hat l\}_{l\in\Lyn Y}$)
and let $\check P_\ell\in\Q[\Lyn X-\{x_0,x_1\}]$ (resp. $\Q[\Lyn Y-\{y_1\}]$) be its dual.
Then one obtains
\begin{eqnarray*}
\Pi_X\ell-\check P_\ell\in\ker\phi&(\mbox{resp.}
&\Pi_Y\ell-\check P_\ell\in\ker\psi).
\end{eqnarray*}
In particular, for $\phi=\zeta$ (resp. $\psi=\zeta$) then one also obtains
\begin{eqnarray*}
\Pi_X\ell-\check P_\ell\in\ker\zeta&(\mbox{resp.}
&\Pi_Y\ell-\check P_\ell\in\ker\zeta).
\end{eqnarray*}
Moreover, for any $\ell\in\Lyn Y-\{y_1\}$ (resp. $\Lyn X-\{x_0,x_1\}$),
the polynomial $\Pi_Y\ell-\check P_\ell\in\QY$ (resp. $\QX$)
is homogenous of degree equal to $\bv\ell\bv>1$.
\end{state}

\begin{state}[Structure of polyzetas]
The $\Q$-algebra generated by convergent polyzetas
is isomorphic to the {\em graded} algebra $(\Q\oplus(Y-y_1)\QY/\ker\zeta,\stuffle)$.
\end{state}

\begin{proof}
Since $\ker\zeta$ is an ideal generated by the homogenous polynomials
then the quotient $\Q\oplus(Y-y_1)\QY/\ker\zeta$ is graded.
\end{proof}

\begin{state}
The $\Q$-algebra of polyzetas is freely generated by {\em irreducible} polyzetas.
\end{state}
\begin{proof}
For any $\lambda\in\Lyn Y$, if $\lambda=\check P_\lambda$ then one obtains the conclusion otherwise
$\Pi_X\lambda-\check P_\lambda\in\ker\zeta$.
Since $\check P_\lambda\in\Q[\Lyn X]$ then $\check P_\lambda$ is polynomial
in Lyndon words of degree less than or equal to $|\lambda|$. Since each Lyndon word
does appear in this decomposition of $\check P_\lambda$, after applying $\Pi_Y$,
we may repeat the same process  until we obtain irreducible polyzetas.
\end{proof}
%\begin{IEEEkeywords}
%Drindel'd Associator, Braid Group, polyzeta.
%\end{IEEEkeywords}
\medskip
\medskip

\end{document}